\newtheorem{theorem}{Theorem}
\begin{document}

\title{Obfuscation of Images via Differential Privacy: From Facial Images to General Images \thanks{We gratefully acknowledge the financial support from the Natural Sciences and Engineering Research Council of Canada (NSERC) under Grants No. RGPIN-2020-06482, No. RGPIN-2016-06253 and No. CGSD2-503941-2017.}
}

\author[1]{William L. Croft}
\author[1]{J{\"o}rg-R{\"u}diger Sack}
\author[2]{Wei Shi}

\affil[1]{Carleton University, School of Computer Science, Ottawa, Canada}
\affil[2]{Carleton University, School of Information Technology, Ottawa, Canada}

\maketitle

\begin{abstract}
Due to the pervasiveness of image capturing devices in every-day life, images of individuals are routinely captured. Although this has enabled many benefits, it also infringes on personal privacy. A promising direction in research on obfuscation of facial images has been the work in the \emph{$k$-same} family of methods which employ the concept of \emph{$k$-anonymity} from database privacy. However, there are a number of deficiencies of $k$-anonymity that carry over to the $k$-same methods, detracting from their usefulness in practice. In this paper, we first outline several of these deficiencies and discuss their implications in the context of facial obfuscation. We then develop a framework through which we obtain a formal \emph{differentially private} guarantee for the obfuscation of facial images in generative machine learning models. Our approach provides a provable privacy guarantee that is not susceptible to the outlined deficiencies of $k$-same obfuscation and produces photo-realistic obfuscated output. In addition, we demonstrate through experimental comparisons that our approach can achieve comparable utility to $k$-same obfuscation in terms of preservation of useful features in the images. Furthermore, we propose a method to achieve differential privacy for any image (i.e., without restriction to facial images) through the direct modification of pixel intensities. Although the addition of noise to pixel intensities does not provide the high visual quality obtained via generative machine learning models, it offers greater versatility by eliminating the need for a trained model. We demonstrate that our proposed use of the \emph{exponential mechanism} in this context is able to provide superior visual quality to pixel-space obfuscation using the \emph{Laplace mechanism}.

\end{abstract}

\textbf{Keywords:} Privacy protection, Facial obfuscation, Differential privacy, Neural networks

\section{Introduction}

With the ever expanding presence of devices used to capture photos and video, visual privacy has become increasingly important. Images and video frames containing faces are routinely captured, e.g., through cameras, closed-circuit television systems \cite{38}, visual sensor networks \cite{37} and a host of other devices and methods \cite{57}. These systems have many benefits including mitigation of crime \cite{38,37}, improved care in assisted-living \cite{39}, and useful services such as Google Street View \cite{41}. However, despite the benefits of the legitimate applications, the potential for infringement on personal privacy must be taken seriously. 

Although many systems require only visual monitoring of behaviour, identities are often captured as well \cite{37}. In some areas, the degree of public surveillance is reaching levels where it becomes possible to profile and track much of the population \cite{39}. In cases where visual information is disseminated to the public, such as with Google Street View, it is imperative to hide the identities of individuals before the images are published \cite{42}. Failure to sufficiently protect privacy may allow undesirable inferences to be drawn about individuals or enable malicious activities such as voyeurism or stalking. Users of mobile devices have also expressed strong aversion to the collection of images from their mobile devices via the applications they use \cite{34}. Even in scenarios where users willingly share images to online platforms, they have expressed concerns over who is able to view their images \cite{35}. In a similar context, the privacy of individuals captured accidentally or without authorization in the backgrounds of images uploaded to such platforms should be taken into consideration. Rich visual information from these sources combined with the great advances in machine learning approaches to facial recognition (e.g., VGGFace \cite{19}) make the exploitation of unprotected visual data a relatively easy task. While such machine learning algorithms are no doubt beneficial in many contexts, it is essential for privacy protection approaches to be resistant to them.

To protect the privacy of individuals, methods for hiding identity via manipulation of the data can be employed. Many methods focus on obfuscation of facial identity since the face is often the most identifiable piece of information \cite{39,57}. In this context, facial identity refers to a visual representation of an individual's face which may potentially be exploited for unique identification of the individual. Trivially, the face could be covered by a uniformly coloured rectangle. This destroys all information about the face, guaranteeing that it can no longer be exploited to reveal an identity. However, this also destroys a great deal of utility and visual appeal. In scenarios where images are shared in online platforms, users have expressed a strong aversion to the use of such rectangles with respect to the visual quality and information content of the images \cite{32,33}.

Less severe methods of obfuscation present trade-offs between the level of privacy attainable and the utility of the data. It is well-known that methods of privacy protection necessitate such a trade-off \cite{59,60}. For example, a study on privacy-preserving systems used for visual awareness of co-workers in remote locations found that the level of privacy protection applied to hide identities directly impacted the value of the data to those using the systems \cite{43}. Preservation of utility is especially important for machine learning and data mining. Visual data can be used to learn about customers in retail environments \cite{49,50} and to detect anomalous or illegal events \cite{51,52}. It is therefore essential for a good method of obfuscation to preserve as much of the non-sensitive information as possible.

The exact specification of how utility is defined and what information is considered non-sensitive depends upon the context in which obfuscation is employed. For example, if data mining is to be employed for retail analytics, customer identity must be protected, however, certain demographic information may be deemed acceptable for release. This may lead to an interpretation of utility based on the degree of preservation of demographic information required for a data mining algorithm. In the case of images shared on social media, utility may instead be focused on a measure of visual aesthetic to ensure that obfuscation of bystanders in photos does not detract from the user experience. Regardless of the scenario, the goal of facial obfuscation is to provide an acceptable trade-off between protection of facial identity and preservation of non-sensitive information beneficial for utility in the obfuscated output.

A number of research directions have been explored for the obfuscation of visual identity in images, e.g., pixelization, blurring, etc. \cite{39}. While many approaches lack a formal privacy guarantee, the $k$-same \cite{1} family of approaches has gained a great deal of traction, largely thanks to its guarantee that for a chosen privacy parameter $k$, obfuscated individuals are indistinguishable within a group of $k$ potential true identities. Although this privacy guarantee is appealing, it suffers from susceptibilities (e.g., composition attacks \cite{36}) carried over from the disclosure control method of $k$-anonymity on which it is based. In this paper, we outline these susceptibilities in the context of privacy in images and propose an alternative, based on differential privacy, that addresses these susceptibilities.

Note that the task of facial obfuscation, which we address in this work, differs from privacy-preserving machine learning performed on facial images. Although both have the goal of preventing the leakage of sensitive information on facial identity, the former releases obfuscated facial images while the later releases a model trained on facial images. Facial obfuscation is a more general goal which can be applied in various ways. The obfuscated images may be used to enable privacy-preserving machine learning, may be shared publicly online, or may be used in other settings that require the protection of facial identity.

\subsection{Contributions and Paper Outline}

Our contributions in this work are as follows:

\begin{itemize}
	
	\item We examine susceptibilities of $k$-same obfuscation to composition attacks and inferences using background knowledge. We discuss theoretically and demonstrate empirically how the privacy guarantee can be violated. We discuss the implications this has on privacy in images and examine certain difficulties in the practical usage of $k$-same obfuscation.
	
	\item We propose the use of the formal privacy guarantee of differential privacy as a means to address the deficiencies of $k$-same obfuscation. We develop the first framework to apply differential privacy for the obfuscation of facial identity in images via generative machine learning models.
	
	\item We additionally propose a method to enforce differential privacy via the direct modification of pixel intensities. By giving up the high visual quality provided by generative models, this allows for a much more versatile approach that can obfuscate any image. We employ a process guided by an image quality function in order to improve the visual quality over the existing results for differential privacy in pixel-space.
	
	\item We conduct a series of experiments to compare differential privacy to $k$-same obfuscation on two well-known datasets for facial images. We demonstrate the resilience of differential privacy to composition and parrot attacks. Furthermore, the results of our experiments suggest that differential privacy offers a comparable level of utility in the obfuscated images to $k$-same obfuscation while providing a stronger privacy guarantee.
	
	\item We provide recommendations for the implementation of generative models for facial obfuscation as well as for image obfuscation in pixel-space to obtain effective and practical privacy protection.
	
\end{itemize}

We provide a review of existing work on the obfuscation of facial images in Section \ref{sec:lit_rev}. We then cover the deficiencies of $k$-same in Section \ref{sec:deficiences} and lay out a framework for differentially private obfuscation of images in Section \ref{sec:differential_privacy}. In Section \ref{sec:pixel}, we propose a method to obfuscate images in pixel-space using the exponential mechanism. Finally, we describe our experiments, provide comparisons between the methods of obfuscation and give an analysis of the results in Section \ref{sec:experiments}.

\section{Literature Review} \label{sec:lit_rev}

\subsection{Facial Obfuscation}

Perhaps the most well-known and earliest studied alterations to images for the prevention of human recognition of faces are pixelization \cite{28} and blurring \cite{27}. Pixelization decreases the information conveyed in an image by dividing the image into a grid of cells and setting all pixels within each cell to a common pixel intensity. Blurring involves the addition of, typically Gaussian, noise to the image. While these methods have been successful at foiling human recognition, they have been shown to be highly ineffective against machine recognition \cite{1}.

Other ad hoc methods of privacy protection involving variations on blurring \cite{12}, warping \cite{13}, morphing \cite{14} and face swapping \cite{8,6} have been studied. However, the methods that have gained the most momentum are those which offer a formal guarantee of privacy. This trend has been reinforced by the legal and legislative demands in the broader context of the release of sensitive data \cite{47,48,57}. To this end, $k$-same approaches have been quite successful. These approaches use an adaptation of $k$-anonymity \cite{31}, a concept from the field of database privacy that guarantees that an anonymized database record is linkable to at least $k$ possible identities. The first adaptation of this concept to image obfuscation worked by aligning a set of input images on their facial features, partitioning the set into clusters of $k$ or more similar images, and then averaging the pixels within each cluster to produce an averaged face which would replace each of the original faces in the cluster \cite{1}. By releasing only the averaged faces, it could be guaranteed that neither human nor machine recognition could do better than identifying the cluster of identities that produced the image, thus limiting the probability of successful re-identification by an upper bound of $\frac{1}{k}$.

One issue with the original $k$-same averaging of pixels was poor visual quality due to inexact alignment of facial features, leading to superimposed features. The $k$-same-m \cite{2} approach improved upon this by using an active appearance model (AAM) \cite{29} to obfuscate faces. AAMs are generative machine learning models for the approximation of visual representations of a particular class of objects (e.g., human faces). A model is trained on a set of images in order to learn about visual patterns and minimize differences with respect to shape and texture between the original images and the generated output of the model. The $k$-same-m approach first trains an AAM and then performs the clustering and averaging process within the parameter space of the model representations of faces to be obfuscated, thus eliminating the issue of superimposed features. Subsequent variants of $k$-same obfuscation via AAMs have since been proposed such as $k$-same-furthest \cite{58} which expressly selects clusters of images having highly dissimilar model parameters in order to make the re-identification of the clustered identities more challenging.

More recently, generative neural networks (NNs) have been applied for $k$-same obfuscation \cite{3,4}. Generative NNs are machine learning models that have shown great success in the generation of visual representations of input class labels \cite{26}. A generative NN passes the input labels through a sequence of convolutional layers, transforming them into features of finer granularity at each layer until reaching a pixel-space output. A training process adjusts weights used by filters in each convolutional layer in order to learn feature representations that minimize a loss function measuring the quality of the output. When trained on a set of images using identities as class labels, a generative NN is able to produce a visual approximation of an identity based on an input class vector. By providing input vectors in which $k$ identities are specified, the generative NN produces $k$-anonymous output.

Efforts have also been devoted to the preservation of utility in the obfuscated images. The $k$-same-select approach \cite{9} proposed partitioning the input images into classes based on the information to be preserved (e.g., male and female identities). By running a separate clustering process within each partition, images within each cluster would share the same class, thus preserving this information in the averaged version. This idea has been extended to the $k$-same-m model by training a different AAM for each combination over the demographic attributes of age, gender and race \cite{11}. By using the appropriately trained AAM for obfuscation, the attributes for which it was trained can be preserved in the output. An alternative approach to the explicit specification of classes to be preserved involves the use of multimodal discriminant analysis to allow for the representation of identity and other attributes in orthogonal subspaces \cite{7}. This allows for $k$-anonymity to be applied within the identity subspace while preserving or modifying other attributes relevant to utility, such as age, gender and race, as desired within each of their subspaces.

In the context of preservation of facial expressions, an approach has been proposed to calculate the difference between AAM parameters of original instances having a neutral expression and the target expression (e.g., happiness) and then add this difference to an obfuscated instance with a neutral expression. In this way, the target expression can be transferred to the obfuscated image \cite{10}. Preservation of facial expressions has also been investigated for generative NNs. By designing the network architecture to allow for multiple input vectors over different types of classes, various combinations of these classes can be targeted in the output \cite{26}. This method has been applied to generate $k$-same output having specific facial expressions \cite{4}.

Recently, differential privacy has been applied to preserve privacy in images by adding noise directly to the pixel intensities \cite{5}. While the use of sufficient noise can provide a strong guarantee of privacy, this renders the output images unrecognizable as human faces. To improve upon the visual quality, an alternate approach has been explored in which singular value decomposition is employed to add noise to the singular values of a matrix of pixel intensities \cite{74}. Although visual quality is improved, the obfuscated images remain far from being photo-realistic. Furthermore, the application of noise only to the singular values, as opposed to all matrices of the decomposition, leaves a potential for information leakage which is not investigated. To the best of our knowledge, our work is the first to study differential privacy applied to generative models for the obfuscation of facial images.

\subsection{Privacy-Preserving Machine Learning: Facial Images}

For completeness, we review here some of the most relevant works on privacy-preserving machine learning, primarily in the context of differential privacy. While most of these works differ fundamentally in their goals and motivation from our own research, this review serves to clarify where our work is situated within the existing literature.

In privacy-preserving machine learning, it is necessary to prevent the model from leaking sensitive information. This has been studied in the context of various attacks such as the inference of training data (i.e., membership inference) \cite{76}, the inference of participant data in collaborative learning \cite{77,78}, and the use of trained models to draw unintended inferences or reconstruct training data (i.e., model-inversion attacks) \cite{79}. In some scenarios, differential privacy can be applied during training, e.g., by adding noise to gradients used to update model parameters \cite{80}, to produce a model that protects details about its training data. In other cases, such as with collaborative learning, it has been shown that differential privacy is not appropriate for the granularity of protection required, and an active adversary can thwart attempts at privacy protection by crafting their shared parameters to force others to reveal more sensitive information \cite{77}. While these attacks and applications of differential privacy can all be related to the protection of sensitive information in facial images, this only holds in the context of training data for machine learning models and relies on analysis related to the training and usage of said models. Contrary to this, we focus on the task of protecting facial images in any context, abstracted from their intended usage.

Differential privacy has also been applied to protect sensitive data used to train generative adversarial networks \cite{61}. When trained in a privacy-preserving manner, the network is able to learn about the distribution of the training data and subsequently generate new instances of data in such a way that the generated instances do not enable accurate inferences about the specific training instances. Variations on this concept have been applied to generate synthetic datasets of facial images in a differentially private manner \cite{62,63}. Although such approaches allow for the generation of facial images in a privacy-preserving manner, this too falls outside of the domain of facial obfuscation. The images produced from a generative network are drawn from a learned distribution, in this case a distribution of facial images, and are thus intended only to look as though they may have been drawn from the training set. This differs from the task of facial obfuscation in which the output image should retain as much information from the input image as possible short of revealing the aspects deemed sensitive.

A different direction explored in some works is the design of neural networks that reduce the leakage of sensitive information in the representations of images used for tasks such as classification. Here, the goal is not the protection of the training data but rather the protection of the input images provided to the model at inference time. This goal has been cast in an adversarial training process in which the network aims to maximize accuracy for the intended classification task while minimizing leakage of sensitive information as measured by entropy \cite{81}. A similar adversarial training approach has been explored in the context of minimizing identity classification accuracy in facial images while maximizing facial expression classification accuracy \cite{82} and while maximizing action detection accuracy \cite{83}. Although these approaches suit the goal of protecting a specific facial image, they are highly tailored to specific tasks and do not produce obfuscated images as part of their output. An alternate approach, Privacy-Protective-GAN \cite{68}, combines the framework of adversarial training for privacy with an auto-encoder style generative architecture to produce obfuscated facial images. While this fits for the task of facial obfuscation, the adversarial training objective provides no formal guarantee regarding the level of privacy that is achieved, nor is there any means to adjust the level of privacy enforced by the trained model.

\section{Weaknesses in Existing Facial Obfuscation} \label{sec:deficiences}

Given the importance of preserving privacy in images, a good method of obfuscation must assert a meaningful guarantee about the level of privacy it provides. Without such a guarantee, it is impossible to formally assess the effectiveness of the obfuscation. Empirical results may help to gain intuition on which approaches appear promising. However, without a formal guarantee to back up the results, it is impossible to assert that privacy will remain protected in untested scenarios against unknown attacks. For this reason, we focus our attention only on methods of obfuscation which offer a formal privacy guarantee.

The necessity of this restriction is underscored by the concept of parrot attacks \cite{1}. A parrot attack uses a neural network to classify identities using labeled instances of obfuscated images as the training set. Having learned about patterns in the obfuscation during training, the network is made much more effective at defeating the obfuscation. Despite pixelization being reasonably effective against human recognition and even naive machine recognition,  it can be completely defeated by a parrot attack. This formed a strong basis for the need of a formal privacy guarantee such as that provided by the $k$-same family.

The $k$-same approaches employ a privacy guarantee derived from $k$-anonymity \cite{31} which asserts that the original identity for any obfuscated image is indistinguishable from at least $k-1$ others. This guarantee is a result of the obfuscation process which draws upon clusters of $k$ or more images to produce averaged instances as replacements for all images in each cluster. This makes it impossible for any software to achieve a better probability of re-identification than $\frac{1}{k}$.

However, the $k$-same guarantee relies on assumptions about the nature of the attack. In this section, we discuss these assumptions. We show why they are often unrealistic in practice, making the guarantee weaker than it appears to be.

\subsection{Background Knowledge}

A well-known deficiency of $k$-anonymity is its susceptibility to attacks that employ background knowledge \cite{46}. This refers to cases where the attacker uses prior knowledge about the sensitive information to draw inferences that violate the privacy guarantee. For example, in the context of a database of hospital records, $k$-anonymity would typically be applied to create groups of database records that are indistinguishable on their demographic attributes but with the original medical condition (i.e., the sensitive information) of each record preserved for statistical analysis. If an attacker already knows the medical conditions of one or more individuals (e.g., friends or family members), they can eliminate the corresponding records from the anonymized groups by finding the matching demographic information and medical condition. If the removal of records reduces the size of an anonymized group to less than $k$, this violates the privacy guarantee.

This concept carries directly over to the $k$-same privacy guarantee. If, via prior knowledge, the attacker knows with certainty that some of the $k$ individuals could not be in the obfuscated image, they can discount them from the set of $k$ identities. An attacker could come by this knowledge in a number of ways: personal knowledge about friends and family, information scrapped from other data sources such as social media, etc. The simple combination of knowledge about the time at which an photo was taken and the approximate locations of some of the $k$ individuals at that time can be enough to derive a proper subset of the $k$ individuals which violates the privacy guarantee.

Contextual information in an image can often enable these types of inferences. Using signs, architecture or landscapes in an image, an attacker might recognize the location or employ software to determine it. Knowledge about locations that individuals frequent may greatly increase the probability of some possibilities over others. Similarly, if some of the $k$ identities are known to live in different cities than where the photo was taken or worse yet, different continents, these identities become much less probable. Other cues such as accessories or clothing on obfuscated individuals may also greatly impact the probabilities accorded to the $k$ possible identities. Since the privacy guarantee asserts that each of the $k$ identities are equally probable, this is also in violation of the guarantee.

We note that the original $k$-same paper does acknowledge this vulnerability to contextual information and asserts that the privacy guarantee applies strictly to the information contained within the face, not to the image as a whole \cite{1}. While this important distinction allows for the privacy guarantee to be upheld, it is a major restriction on the practical applicability of the $k$-same guarantee. Most contexts in which facial obfuscation is applied will be rich with contextual information, making the privacy guarantee much less meaningful.

\subsection{Composition Attacks}

Another deficiency of $k$-anonymity is a susceptibility to composition attacks \cite{36}. This is a class of attacks that exploit information from multiple, potentially uncoordinated, obfuscated releases to violate the privacy guarantee. A simple instance of this is the intersection attack. An attacker first identifies the clusters in which a particular individual exists from two different releases. If the releases were uncoordinated, the clusters likely differ, allowing the attacker to take their intersection to achieve a new set with a cardinality less than $k$.

This attack again carries directly over to the $k$-same approach. Consider a scenario where an individual takes a photo that they wish to upload to social media. Privacy protection might be applied to the individual or perhaps to bystanders who were captured in the background of the photo. Should the individual decide to upload the same photo to two or more social media platforms, the issue of uncoordinated obfuscation immediately arises. An attacker needs only scrape these platforms for similar photos to apply an intersection attack.

Intersection attacks may even be effective for multiple releases from the same organization if care is not taken. For example, an individual may take consecutive photos and then upload all of them. Algorithms for $k$-same determine clusters based on the similarity of faces but many factors beyond facial identity (e.g., pose, angle and lighting) could impact similarity. It is therefore not unlikely that multiple images of the same individual will result in different clusters. Sequences of images uploaded in this way would be an ideal target for intersection attacks.

Most $k$-same approaches require each individual to appear only once in the gallery of images to be obfuscated. This prevents intersection attacks for releases from the same organization but does not protect against uncoordinated releases across multiple organizations. Furthermore, enforcing this restriction may be very challenging in practice. While the primary subject in a photo might be determined based on the account used to upload the photo, other individuals in the photo cannot be correctly identified 100\% of the time. Face recognition software has not yet reached this level of accuracy. Without manual labeling, such a policy cannot be enforced. Beyond this, the restriction of one image per identity is very severe and does not match typical use cases for image sharing.

\subsection{Other Difficulties}

We discuss here two other difficulties that arise when using $k$-same obfuscation in practice. Although these difficulties do not violate the privacy guarantee, they hinder meaningful applications for $k$-same obfuscation in some contexts.

The first problem arises from the requirement of an input gallery of images. This may be appropriate for scenarios where batches of images are obfuscated but it is awkward to apply to cases where images are sporadically uploaded (e.g., in social media platforms). One might consider the use of a preloaded static gallery or even a dynamic gallery that gets updated as new images are uploaded. This, however, is not a good solution since identities can then participate in more than one cluster. Furthermore, if an attacker records information about identities known to be in the gallery, those identities can be discounted when an image is uploaded for a new identity. An alternative solution could rely on buffering uploaded images to form a gallery that can eventually be used to release a batch of obfuscated images. However, this necessitates a trade-off between the size of the gallery (and thus the quality of the output) and ability to deliver a timely service. In an era where users expect images to be uploaded instantly, this is not likely to be a manageable trade-off. The release of multiple batches also increases the chances of enabling composition attacks.

The second problem relates to the preservation of utility in the obfuscated output. Approaches that partition the gallery according to classes to be preserved (e.g., combinations of age and gender) place an even greater strain on the input gallery requirement. Working separately with the subset of images from each class greatly reduces the number of images available for clustering. Such an approach is not scalable for large numbers of classes that would be needed for finely grained attention to utility. In the worst case, some classes may be outliers in the overall distribution and could lack sufficient images to form a cluster. These classes would have to be merged with others in order to achieve the $k$-same guarantee, thus failing to achieve the desired granularity of classes.

\section{Differential Privacy for Generative Models} \label{sec:differential_privacy}

Due to the deficiencies of the $k$-same privacy guarantee in practical applications of facial obfuscation, we argue that a more robust privacy guarantee is required. Following the advances in the field of database privacy, we consider the potential of differential privacy to provide a stronger privacy guarantee. Differential privacy has real world applications in organizations including the US Bureau of Census, Google and Apple \cite{71} and has been studied as a means to comply to legal definitions of privacy such as FERPA \cite{72} and GDPR \cite{73}. In this section, we first review basic theory of differential privacy. We then adapt the privacy guarantee to fit the context of generative machine learning models for images and we formalize a framework to apply differential privacy to facial images. We discuss how the derived privacy guarantee addresses the issues identified with the $k$-same approach. Finally, we apply our framework to implement differentially private facial obfuscation using a generative NN.

\subsection{Differential Privacy for Databases}

A privacy guarantee that offers an absolute bound on re-identification risk necessitates restrictive assumptions about the attacker. This is due to the fact that it is impossible to prevent an attacker from learning about the sensitive information through means other than the obfuscated release \cite{22}. Differential privacy recognizes this difficulty and instead adopts a privacy guarantee that limits the increase in an attacker's knowledge about the sensitive information. In the context of databases, the goal is to release aggregate information about the database while preventing that information from being exploited to derive sensitive details about the individual records. Differential privacy functions by using a \emph{randomization mechanism} to add controlled noise to database query responses in order to release useful responses while achieving a desired level of indistinguishability between potential configurations of the database contents.

Two databases are considered to be \emph{adjacent} if they differ by a single record. Informally, the privacy guarantee enforces that any pair of adjacent databases must be bounded within a multiplicative factor of $e^{\epsilon}$ (where $\epsilon$ is the \emph{privacy parameter}) in their probabilities of producing the same noisy query response. This is often interpreted as a ratio of $e^{\epsilon}$ between these probabilities. With a sufficiently small ratio, similar databases have similar probability distributions over their noisy query responses, causing them to behave similarly with respect to the noisy query responses they produce. This limits the usefulness of the noisy responses as a means to distinguish between potential configurations of the database. The privacy guarantee \cite{30} in Formula \ref{eq:dp_guar} formally states this requirement in terms of any pair of adjacent databases $D_1, D_2 \in \mathbb{D}$, where $\mathbb{D}$ is the set of valid database configurations, and a randomization mechanism $K: \mathbb{D} \rightarrow \mathbb{R}^n$, where $n \in \mathbb{Z}^+$.

\begin{equation} \label{eq:dp_guar}
	\Pr \left( K \left( D_1 \right) = R \right) \leq 
	e^{\epsilon} \Pr \left( K \left( D_2 \right) = R \right) 
	\hspace{2em} \forall R \in \mathbb{R}^n.
\end{equation}

To achieve this privacy guarantee, the mechanism $K$ must take into account the value of $\epsilon$ and the \emph{query sensitivity}. The sensitivity $\Delta F$ of a query $f: \mathbb{D} \rightarrow \mathbb{R}^n$ is defined as the maximum possible $L_1$ distance between the query responses for any pair of adjacent databases. The guarantee can be achieved by adding to the query response a vector of $n$ continuous random variables, each drawn independently from a Laplace distribution with $\frac{\Delta F}{\epsilon}$ as its scaling parameter \cite{30}. The exponential decay of probability density in the Laplace distribution benefits the utility of the mechanism by limiting the expected perturbation of the query responses. Through the selection of an appropriate value for $\epsilon$, a data custodian can control how much information is revealed about the contents of the database.

\subsection{Framework for Generative Models} \label{sec:framework}

We now consider how differential privacy can be applied to generative models for images. A generative model can represent images of instances from specific classes (e.g., human faces) using a numeric representation that abstracts from pixel intensities. Our goal is to protect the privacy of individuals in images by modifying these numeric representations to prevent facial identification while maintaining utility and visual quality. Differential privacy is ideal for this purpose as it provides a robust guarantee against the accuracy of the inferences an attacker can make about the original data. The application of noise to the numeric representation of the model allows for the generation of photo-realistic instances of novel human faces. This avoids the significant degradation in visual quality that results from the addition of noise to pixel intensities.

When moving from the domain of databases to that of generative model representations, the concepts of adjacency and query sensitivity can no longer be applied for the configuration of a mechanism. In place of a database where each record is an individual, we have a numeric representation of a single individual (e.g., features extracted by the model). To protect sensitive data in this form, one can apply a generalization of differential privacy to arbitrary \emph{secrets} \cite{23}, where a secret is any numeric representation of data. In our case, the secret is the generative model representation of an individual. This generalization substitutes the notion of adjacency between databases with distance between secrets. By controlling noise according to an appropriate distance metric, the privacy guarantee is adapted to ensure that similar secrets are highly indistinguishable while very different secrets remain distinguishable. For a pair of databases, the distance between them is the number of records by which they differ. For other types of secrets, the distance metric must be carefully chosen in order to provide an appropriate privacy guarantee. An example of a well-studied instantiation of this generalization is geo-indistinguishability \cite{53} which protects users' geographic locations, represented as two-dimensional coordinates, using $L_2$ distance as the metric.

The notion of distance between secrets is appropriate for the representation of images within a generative model. Any model that employs a numeric representation of images allows for the calculation of distance between images. While the exact representation of an image differs from model to model, they can generally mapped to a vector of fixed length with little difficulty. We provide details on how this concept can be applied to generative NNs in Section \ref{sec:implementation}. To develop a general framework here, we consider the representation of an image to be a vector $X \in \mathbb{R}^n$ and the randomization mechanism to be a function $K: \mathbb{R}^n \rightarrow \mathbb{R}^n$ used to produce an obfuscated instance of the image. Although the differential privacy generalization only deals explicitly with one and two-dimensional secrets \cite{23}, its generalization to an $n$-dimensional vector is straightforward. We therefore adapt the privacy guarantee to suit this purpose in Formula \ref{eq:dist_guar}, using a distance function $d: \mathbb{R}^n \times \mathbb{R}^n \rightarrow \mathbb{R}$.

\begin{equation} \label{eq:dist_guar}
	\begin{split}
		\Pr \left( K \left( X_1 \right) = R \right) \leq 
		e^{\epsilon d \left( X_1, X_2 \right)} \Pr \left( K \left( X_2 \right) = R \right) \hspace{2em} \forall X_1, X_2, R \in \mathbb{R}^n.
	\end{split}
\end{equation}

Comparing this to Formula \eqref{eq:dp_guar}, the databases $D_1$ and $D_2$ have been replaced by secrets $X_1$ and $X_2$ and the distance function now appears in the exponent of the multiplicative factor $e^{\epsilon}$. The distance between any pair of secrets acts as a coefficient to $\epsilon$ when interpreting the ratio of their probabilities. Intuitively, the meaning is that the more similar a pair of images are to each other, the harder is it to determine which of them led to a given obfuscated instance. This hampers the accuracy with which attempts at re-identification can be made. To achieve this guarantee, we must first determine an appropriate distance metric to measure the distinguishability of the numeric representations of images.

A natural choice for the distance metric is $L_1$ distance, however, we must be wary of the meaning of each element in the vectors. Should certain elements have differently sized ranges, they should be obfuscated using different magnitudes of noise. If one element has a much larger range than the others, the addition of noise configured to the smaller range would do little to prevent an inference of high accuracy on the original value of the element. We therefore apply normalization such that the distance between any pair of elements in the $i^{th}$ position of a pair of vectors falls within the range $[0,1]$. Letting $R_i = [i_{min}, i_{max}]$ be the range of elements in the $i^{th}$ position of a model representation vector, we define a normalized, element-wise distance metric as follows:

\begin{equation} \label{eq:ele_dist}
	d_i \left( x, x' \right) = 
	\frac{\left| x - x' \right|}{i_{max} - i_{min}}
	\hspace{2em} \forall x, x' \in R_i.
\end{equation}

A distance metric for vectors defined as the sum of the element-wise distances for each position would be appropriate for images represented by the same model. However, a more useful framework would allow for reasoning about the level of privacy across different models. Ideally, the meaning of a privacy parameter $\epsilon$ applied to one model should have a similar meaning for a different model. For this, we require another normalization to account for models having vectors of different lengths. We therefore define the distance metric for vectors as follows:

\begin{equation} \label{eq:vec_dist}
	d \left( X_1, X_2 \right) = 
	\frac{\sum\limits_{i=1}^n d_i \left( X_{1i}, X_{2i} \right)}{n}
	\hspace{2em} \forall X_1, X_2 \in \mathbb{R}^n.
\end{equation}

By using this distance metric in combination with Formula \ref{eq:dist_guar}, we obtain a meaningful privacy guarantee for the model representations of images. Although this type of metric is, in general, not novel, its use in this context is. We must therefore address how to configure a mechanism to satisfy this instantiation of the privacy guarantee. This leads to our main result in the development of a framework for the application of differential privacy to generative models for images.

\begin{theorem} \label{th:lap}
	Any image $X \in \mathbb{R}^n$ can be protected by $\epsilon$-differential privacy through the addition of a vector $\left( Y_1, ..., Y_n \right) \in \mathbb{R}^n$ where each $Y_i$ is a random variable independently drawn from a Laplace distribution using a scaling parameter $\sigma_i = \frac{n \left( i_{max} - i_{min} \right)}{\epsilon}$.
\end{theorem}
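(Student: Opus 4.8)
The plan is to recognize this as a direct adaptation of the classical Laplace-mechanism argument for differential privacy, transported to the distance-based guarantee of Formula~\eqref{eq:dist_guar} with the metric of Formula~\eqref{eq:vec_dist}. First I would fix the mechanism as $K(X) = X + (Y_1,\dots,Y_n)$ and write out the density of its output at a point $R \in \mathbb{R}^n$. Because the $Y_i$ are drawn independently, the joint density factorizes, so I would interpret the quantity $\Pr(K(X) = R)$ (really a density value, since the $Y_i$ are continuous) as
\begin{equation*}
	\Pr\left( K(X) = R \right) = \prod_{i=1}^{n} \frac{1}{2\sigma_i} \exp\left( -\frac{\left| R_i - X_i \right|}{\sigma_i} \right),
\end{equation*}
which follows immediately from the Laplace density with scale $\sigma_i$ centered at $X_i$.

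Next I would form the ratio of the output densities for $X_1$ and $X_2$ at the same $R$. The normalizing constants $\tfrac{1}{2\sigma_i}$ cancel, leaving $\exp\left( \sum_{i=1}^n \left( \left| R_i - X_{2i} \right| - \left| R_i - X_{1i} \right| \right) / \sigma_i \right)$. The key analytic step is the reverse triangle inequality applied coordinate-wise, giving $\left| R_i - X_{2i} \right| - \left| R_i - X_{1i} \right| \le \left| X_{1i} - X_{2i} \right|$, which bounds the ratio by $\exp\left( \sum_{i=1}^n \left| X_{1i} - X_{2i} \right| / \sigma_i \right)$ and, crucially, eliminates the dependence on $R$ entirely. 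I would then substitute $\sigma_i = \tfrac{n(i_{max} - i_{min})}{\epsilon}$ and observe that the factor $\tfrac{\epsilon}{n}$ pulls out of the sum while each term $\tfrac{\left| X_{1i} - X_{2i} \right|}{i_{max} - i_{min}}$ is exactly the element-wise distance $d_i(X_{1i}, X_{2i})$ from Formula~\eqref{eq:ele_dist}. This recovers $\exp\left( \epsilon \cdot \tfrac{1}{n}\sum_{i=1}^n d_i(X_{1i}, X_{2i}) \right) = e^{\epsilon\, d(X_1, X_2)}$ by the definition in Formula~\eqref{eq:vec_dist}, completing the bound for arbitrary $R$.

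I do not expect a genuine obstacle, since the computation is routine once the right density factorization is in place; the real care lies in the bookkeeping of the two normalizations. Specifically, the per-coordinate range factor $(i_{max} - i_{min})$ in $\sigma_i$ must cancel against the denominator of $d_i$, and the factor of $n$ in $\sigma_i$ must reproduce the $\tfrac{1}{n}$ averaging in the vector metric $d$. The one subtlety worth flagging explicitly is that, because the mechanism is continuous, the statement $\Pr(K(X)=R)$ should be read as a probability density (equivalently, the bound holds for the densities and hence, by integration, for the probability of any measurable output set), so I would remark that the inequality is understood in this density sense to match the phrasing of Formula~\eqref{eq:dist_guar}.
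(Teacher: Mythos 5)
Your proposal is correct and follows essentially the same route as the paper's proof: factorize the independent Laplace densities over coordinates, bound the density ratio via the coordinate-wise reverse triangle inequality $\left| R_i - X_{2i} \right| - \left| R_i - X_{1i} \right| \le \left| X_{1i} - X_{2i} \right|$, and substitute $\sigma_i = \frac{n(i_{max}-i_{min})}{\epsilon}$ so that the bound collapses to $e^{\epsilon\, d(X_1,X_2)}$. The only differences are presentational: you run the argument forward from the mechanism rather than backward from the required inequality, and you explicitly flag the density (rather than point-probability) interpretation, which the paper leaves implicit.
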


\begin{proof}
	
	We must satisfy the privacy guarantee (Formula \ref{eq:dist_guar}) using our proposed distance metric (Formula \ref{eq:vec_dist}). The form this privacy guarantee takes is our starting point in Formula \ref{eq:proof_2}. Through manipulation of this inequality and the substitution of mechanism probabilities with a Laplace distribution, we prove that the selection of an appropriate scaling parameter for each instance of the Laplace distribution allows for the privacy guarantee to be satisfied.
	
	{\small
		
		\begin{equation} \label{eq:proof_2}
			\begin{split}
				\prod\limits_{i=1}^n \Pr \left( K \left( X_{1i} \right) = R_i \right) \leq 
				e^{\frac{\epsilon \sum\limits_{i=1}^n d_i \left( X_{1i}, X_{2i} \right)}{n}}
				\prod\limits_{i=1}^n \Pr \left( K \left( X_{2i} \right) = R_i \right) \hspace{2em} \forall X_1, X_2, R \in \mathbb{R}^n.
			\end{split}
		\end{equation}
		
		\begin{equation} \label{eq:proof_3}
			\begin{split}
				\prod\limits_{i=1}^n \Pr \left( K \left( X_{1i} \right) = R_i \right) \leq 
				\prod\limits_{i=1}^n e^{\frac{\epsilon d_i \left( X_{1i}, X_{2i} \right)}{n}}
				\prod\limits_{i=1}^n \Pr \left( K \left( X_{2i} \right) = R_i \right) \hspace{2em} \forall X_1, X_2, R \in \mathbb{R}^n.
			\end{split}
		\end{equation}
		
		\begin{equation} \label{eq:proof_4}
			\begin{split}
				\prod\limits_{i=1}^n \frac{e^{-\frac{\left| X_{1i}, R_i \right|}{\sigma}}}{2 \sigma} \leq 
				\prod\limits_{i=1}^n e^{\frac{\epsilon d_i \left( X_{1i}, X_{2i} \right)}{n}}
				\prod\limits_{i=1}^n \frac{e^{-\frac{\left| X_{2i}, R_i \right|}{\sigma}}}{2 \sigma} \hspace{2em} \forall X_1, X_2, R \in \mathbb{R}^n.
			\end{split}
		\end{equation}
		
		\begin{equation} \label{eq:proof_6}
			\begin{split}
				\prod\limits_{i=1}^n e^{\frac{ \left| X_{2i}, R_i \right| - \left| X_{1i}, R_i \right|}{\sigma}} \leq 
				\prod\limits_{i=1}^n e^{\frac{ \left| X_{2i} - X_{1i} \right|}{\sigma}} \leq
				\prod\limits_{i=1}^n e^{\frac{\epsilon d_i \left( X_{1i}, X_{2i} \right)}{n}} \hspace{2em} \forall X_1, X_2, R \in \mathbb{R}^n.
			\end{split}
		\end{equation}
		
		\begin{equation} \label{eq:proof_7}
			\begin{split}
				\prod\limits_{i=1}^n e^{\frac{\epsilon d_i \left( X_{1i}, X_{2i} \right)}{n}} =
				\prod\limits_{i=1}^n e^{\frac{ \epsilon \left| X_{2i} - X_{1i} \right|}{n \left( i_{max} -i_{min} \right) }}
				\hspace{2em} \forall X_1, X_2.
			\end{split}
		\end{equation}
		
	}
	
	From Formula \ref{eq:proof_7}, it becomes clear that the inequality holds when using an independent Laplace distribution for each pair of elements $X_{1i}, X_{2i},$ substituting the scaling parameter $\sigma$ with a corresponding value $\sigma_i = \frac{n \left( i_{max} - i_{min} \right)}{\epsilon}$.
	
\end{proof}

Using the generalization of differential privacy, the notion of query sensitivity is implicitly captured in the distance metric. Since the distance metric of Formula \eqref{eq:vec_dist} has a range of $\left[ 0, 1 \right]$, the ratio of probabilities for a pair of maximally dissimilar images to produce the same obfuscated output is $e^{\epsilon}$. This is akin to the meaning of the privacy guarantee for a pair of databases that differ on every record. In order to select an appropriate value of $\epsilon$, a data custodian must keep in mind that similar images will have a very small distance between them, requiring much larger values of $\epsilon$ to provide a reasonable ratio. In Section \ref{sec:experiments}, we demonstrate the implications of the choice of $\epsilon$ on the levels of privacy and utility.

\subsection{Benefits of Differentially Private Facial Obfuscation}

We now describe the improvements we obtain from the use of differential privacy for each of the problems identified in Section \ref{sec:deficiences}.

\subsubsection{Background Knowledge}

By removing dependence of the attack model on an absolute level of re-identification risk, we are able to reason about the level of privacy in the presence of attackers with background knowledge. If the location in a photo is identified as a particular city, no facial obfuscation can prevent the inference that individuals living in the identified city have a higher probability of being the obfuscated identity than individuals living elsewhere. Yet, the differential privacy guarantee continues to hold as the background knowledge does not impact the conditional probability distribution used by the randomization mechanism. Since the privacy guarantee concerns only the change in the attacker's knowledge when presented with the obfuscated data (e.g., the face), it is unaffected by other sources of information the attacker may gain access to.

\subsubsection{Composition Attacks}

Another very important property of differential privacy is its resilience to composition attacks. The composition theorem \cite{30} states that for two differentially private releases using privacy parameters $\epsilon_1$ and $\epsilon_2$ respectively, the privacy guarantee holds for a privacy parameter $\epsilon = \epsilon_1 + \epsilon_2$. Thus, even in the case of uncoordinated releases, we still have a valid privacy guarantee. Furthermore, this removes the restriction on the same individual appearing only once in the release of obfuscated images.

\subsubsection{Input Image Gallery}
Differentially private image obfuscation has no need for a gallery of images in order to perform obfuscation. Since noise is added on a per-image basis, there is no computation of clusters required. Given a trained model, obfuscation of a single image or a batch of images can be performed with ease. This makes the obfuscation process much more versatile.

\subsection{Privacy Guarantee Interpretation} \label{sec:priv}

Although the interpretation of the differential privacy guarantee is relatively well understood in the context of databases, the distance-generalized guarantee which we employ changes the interpretation of the privacy parameter $\epsilon$. To assist users and data curators in understanding the implications of a chosen privacy parameter, we provide a brief discussion here on the generalized privacy guarantee.

Recall that the generalized privacy guarantee replaces databases with arbitrary secrets and scales $\epsilon$ by the distance between any pair of secrets for which the guarantee is to be interpreted. The distance between any pair of databases can in fact be interpreted as a Hamming distance (i.e., the number of records by which a pair of databases differ), allowing for the generalization to capture the standard interpretation of differential privacy. At distance 1, this corresponds to the basic privacy guarantee and at distance $d > 1$, this corresponds to $d$ transitive applications of the privacy guarantee, resulting in a multiplicative bound of $e^{d \epsilon}$. For databases with $n$ records, the range of the distance metric is $[1,n]$.

In contrast, the distance metric that we employ is bounded by the range $[0,1]$. Although there is no direct correspondence between distance measured for databases and distance measured for other domains, it is clear that interpretation of the privacy guarantee for a pair of maximally distant secrets (e.g., a pair of databases which differ on every record) has little meaning in practice. Just as differential privacy is interpreted in terms of similar pairs of databases, it should also be interpreted in terms of pairs of images encodings that are that are similar to each other. In the context of facial images, the intuition is that the obfuscated encoding should be difficult to distinguish for other similar encodings, rendering it difficult to associate with a specific individual. We make no claim about an exact distance between secrets for which the guarantee should be interpreted but propose that users should examine the implications of the privacy guarantee for small distances to gain intuition on the level of distinguishability that will be achieved between secrets at various distances from each other. For example, for pairs of secrets at distance 0.01 from each other, the privacy guarantee enforces a multiplicative bound of $e^{0.01 \epsilon}$. While this may seem like a rather small distance, this is in fact a much larger fraction of the total range than is often used in the context of databases where there may be thousands or even hundreds of thousands of records.

\subsection{Implementation Details} \label{sec:implementation}

Generative NNs have the useful property of producing photo-realistic images. We now describe how our framework can be applied to these models. Provided that the addition of noise is properly controlled, the output will be a photo-realistic image of any newly created identity.

\begin{figure*}
	\centering
	\includegraphics[width=1\textwidth]{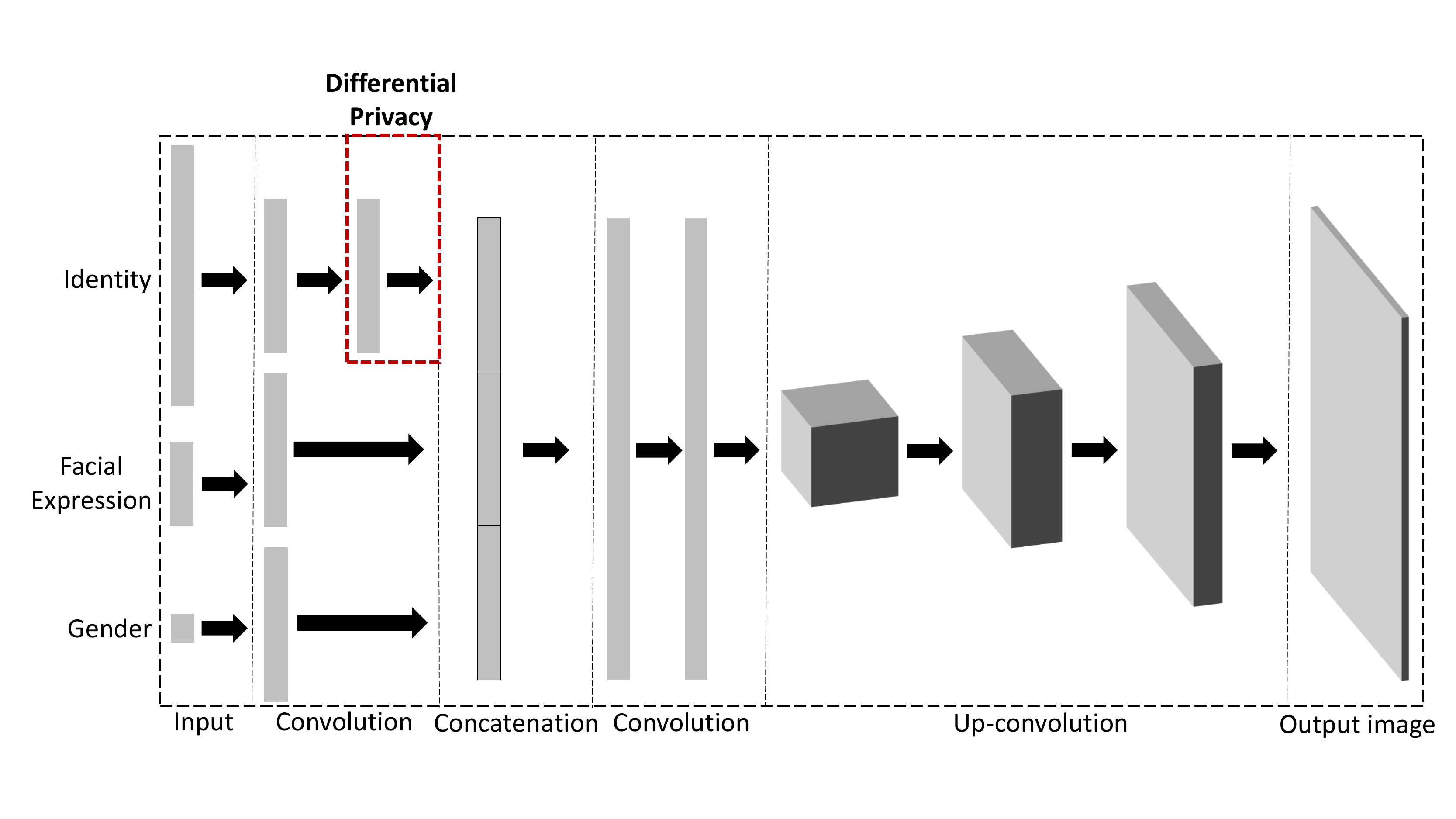}
	\caption{Visualization of the layer architecture in an up-convolutional neural network using differential privacy. Noise is applied to the output of the second identity layer. The numbers and shapes of the convolutional layers shown here are not exact and represent only the general structure of such a network.}
	\label{fig:architecture}
\end{figure*}

We consider network architectures that take one or more class vectors as input and employ up-convolution to transform the input into a visual representation in pixel space \cite{26}. By considering each identity to be a different class, an input vector can specify the individual to be generated. The identity class vector is an obvious choice as the model vector to be obfuscated. However, this leads to some form of interpolation between the identities. To apply a finer degree of modification to the identity, we propose the application of obfuscation at the second layer of the network. Typically, the second layer applies convolution to the class vector and transforms it into a vector of high-level numeric facial features. By applying obfuscation to these features instead, we can achieve a richer variety in the potential modifications to the face. We therefore apply obfuscation to the output of the first convolutional layer of the network and pass the obfuscated feature vector on as the input to the next layer of the network. A sample architecture is shown in Figure \ref{fig:architecture}.

Our proposed implementation can be interpreted as an extra layer added to the network which is used only after training is complete when obfuscation is to be applied. The layer has no weights and is simply an application of the Laplace mechanism, configured as described in Theorem \ref{th:lap}. The noisy output, then conforming to the differential privacy guarantee, is passed on to the next layer of the network. All other propagation through the network occurs as normal. By ensuring that the encoding of the facial identity has been passed through the layer implementing the Laplace mechanism, we are able to produce an output image depicting a facial identity that has been obfuscated in a differentially private manner.

Information about the range of each model vector element can be used as a means to preserve the visual quality of the obfuscated output. Noisy elements that have gone too far beyond the valid range may lead to visual artifacts or distortions in the output image. To prevent this, we snap any out-of-bounds noisy value back to the nearest valid value. Since differential privacy is resistant to any form of post-processing \cite{30} and the ranges of the elements are non-sensitive information, this step cannot violate the privacy guarantee.

\subsection{Obfuscation of Images with Unknown Identities} \label{sec:unseen}

When designing a system for the obfuscation of identities in images, an important consideration is the ability to obfuscate any identity. However, the generative NN architecture we employ does not directly allow for the representation of classes that were not learned during the training process. Since each identity is a different class, this means that the model cannot directly represent unknown identities and thus cannot obfuscate them.

To solve this problem, we propose to formulate the approximation of an appropriate input vector for an unknown identity as an optimization problem. Since the generative NN has learned a representation of each training identity, a new identity can be approximated as a weighted sum of the known identities. When provided with these weights as the input identity vector, the generative NN would produce an interpolation between the identities which can act as an approximate visual representation of the unknown identity. In order to formalize this concept as an optimization problem, we must select an appropriate representation for the identities. This cannot be done using the generative NN feature vectors since the representation of the target identity is unknown. We therefore employ a secondary neural network that has been trained for classification of facial identity. By removing the final layer of the classification network, its output becomes a high-level vector of facial features which we can use as the representation of an identity. The significance of using a classification network for this purpose is that it need not have seen any of the target identities in its training data in order to produce feature vectors for them. Such a network is therefore ideal to provide feature vectors for the identities from the generative NN training data as well as for the target unknown identity.

We now formalize the optimization problem using real-valued feature vector representations of identities from a classification network. Let $\mathbb{X}$ be a set of n-dimensional feature vectors representing $m$ identities on which the generative NN was trained (i.e., $\left| \mathbb{X} \right| = m$) and let $Y$ be an n-dimensional feature vector representing an identity that is unknown to the generative NN. Our goal is to determine a set $W$ of weights that minimize the distance between a weighted sum of the vectors in $\mathbb{X}$ and the vector $Y$.

\begin{equation} \label{eq:opt_problem}
	\min\limits_{W} \left\| \sum\limits_{i=1}^{m} W_i \mathbb{X}_i - Y \right\|
\end{equation}

While this optimization problem is similar to the format of an objective function for a linear program, the necessity of absolute value calculations for the $L_1$ distance between the vectors prevents this from being written as a linear function. However, this problem, known as least absolute deviations, can be rewritten in an alternate but equivalent formulation that avoids the need for absolute value functions through the introduction of additional variables \cite{70}. This formulation is as follows:

\begin{equation} \label{eq:lp_obj}
	\min \sum\limits_{i=1}^{n} u_i
\end{equation}

Subject to the following constraints:

\begin{equation} \label{eq:lp_con}
	u_i \geq y_i + \sum\limits_{j=1}^{m} W_j \mathbb{X}_j \hspace{2em} i = 1, ..., n
\end{equation}

\begin{equation} \label{eq:lp_co2n}
	u_i \geq - \left( y_i + \sum\limits_{j=1}^{m} W_j \mathbb{X}_j \right) \hspace{2em} i = 1, ..., n
\end{equation}

The added constraints ensure that the values assigned to the new variables respect the absolute value functions from the original problem. This formulation can be given to any linear programming solver in order to find the optimal weights for the approximation of the unknown identity represented.

\section{Differential Privacy in Pixel-Space} \label{sec:pixel}

While we have thus far considered the application of differential privacy to the numeric representations of generative models, it can also be applied directly to the pixel intensities of an image. Such an approach suffers in the visual quality of the obfuscated images. Noise added in this way is no longer guided to obfuscate only specific aspects such as identity. Yet, by discarding the use of a generative model, the randomization mechanism can be applied to any image, regardless of what is depicted. This versatility allows for obfuscation to be applied to images that are not readily captured by available models. For example, it may be desirable to obfuscate signs, license plates or complete vehicles. A pixel-space randomization mechanism provides a means to directly achieve privacy protection on any image.

\subsection{Obfuscation via the Laplace Mechanism}

Differential privacy via a Laplace mechanism has previously been applied in combination with pixelization to achieve pixel-space obfuscation \cite{5}. In this work, the authors defined two images to be adjacent if they differed by $n$ pixels, where the value of $n$ is chosen by the user based on the size (in pixels) of a window that covers the portion of the image deemed to be sensitive information. For example, if a face covers roughly a $100 \times 100$ pixel space and the user wishes to obfuscate the face, $n$ would be set to 10000. Following the standard framework for differential privacy, the query sensitivity is then defined as the maximum possible difference between any pair of adjacent images. For an image with $c$ channels, each of which covers a range of $k$ possible pixel intensities, the query sensitivity is $(k-1)nc$. Given this query sensitivity, $\epsilon$-differential privacy can be provided through an independent application of the Laplace mechanism to each pixel in each channel using a scaling parameter of $\frac{(k-1)nc}{\epsilon}$ \cite{5}. To reduce the query sensitivity, the authors propose to apply pixelization prior to obfuscation. For a specified level of pixelization $b \in \mathbb{Z} \hspace{0.1cm} | \hspace{0.1cm} 1 \leq b$, the image is divided into $b \times b$ grids of pixels where each grid is set to the average of its pixel intensities. By treating each grid as a single uniform value to be obfuscated, the required scaling parameter for the Laplace mechanism is reduced to $\frac{(k-1)nc}{\epsilon b^2}$.

\subsection{Obfuscation via the Exponential Mechanism}

While the Laplace mechanism is configured based on differences in pixel intensities, we propose to instead apply the exponential mechanism \cite{55} in order to control the distribution of the obfuscated output using a measure of visual quality for potential outputs of the mechanism. Given an input $I$ (in our case an original image), the probability distribution over the set of potential obfuscated outputs $\mathbb{I}$ is defined as:

\begin{equation} \label{eq:exponential}
	\Pr \left( I' \right) \propto
	e^{\epsilon q \left( I, I' \right)}
	\hspace{2em} \forall I' \in \mathbb{I},
\end{equation}

where $q$ is a quality function that measures the utility of $I'$ with respect to $I$. In other words, $q$ is a measure that reflects the usefulness of an obfuscated output $I'$ given its original value $I$. Due to this, the resulting distribution over the potential obfuscated outputs explicitly gives preference to outputs with higher utility as measured by $q$.

We propose the use of the structural similarity index measure (SSIM) \cite{54} as the utility function used to control the mechanism. This provides a human-centric measure of image quality which is designed to match up with how the human visual system perceives information. SSIM incorporates aspects of luminance, contrast and structure into a measure that is averaged over a sliding window intended to mimic how human eyes scan a large area but focus only on local areas. The use of SSIM allows us to produce a mechanism which achieves differential privacy while explicitly favouring obfuscated results with higher visual quality, as perceived by the human visual system. For instance, while a contrast-shifted image may yield a relatively high change in pixel intensities, the visual change in the image is far less significant than an equivalent change in pixel intensities distributed at random across the image. The Laplace mechanism obfuscates images using a probability distribution based on differences in pixel intensities which fails to capture this concept of visual quality. Contrarily, the exponential mechanism using SSIM as its quality function directly reflects this notion of visual quality in its distribution over its obfuscated output. A human-centric notion of utility is of particular importance for obfuscation employed in contexts such as social media and image sharing platforms where user experience is a key concern.

\subsection{Handling the Complexity Exponential Mechanism}

A difficulty arises with the exponential mechanism in its implementation since the creation of the distribution used by the mechanism often requires explicit calculation of the quality score associated with each potential obfuscated output. Given a image containing $n$ pixels over $c$ channels, each of which covers $k$ possible pixel intensities, there are a total of $k^{nc}$ possible states an image could take on. Enumeration of these states is computationally intractable, even for very small images. For example, a $10 \times 10$ pixel RBG image has $256^{300}$ possible states. Therefore, to implement an SSIM-based exponential mechanism, we must reduce the number of states to a tractable size.

Since SSIM is intended to simulate the way the human visual system focuses only on small areas at any one time, it is calculated for a small window of pixels, typically $11 \times 11$. The overall measure is calculated by sliding this window over the image in single pixel steps and averaging the values measured at each step. Similar to this process, we propose to apply the exponential mechanism to a $p \times p$ pixel window which moves over the image in steps of $p$ pixels in order to ensure that there are no overlapping applications of obfuscation. We then coarsen the granularity of the intensities that the pixels are allowed to take on to $k' < k$ possible values, where $k$ is the original number of possible intensities. Since differentially private obfuscation reduces the accuracy of the sensitive data by design, it is often unnecessary to preserve a high degree of precision. The reduction in the granularity of pixel intensities therefore has little impact on the quality of the obfuscated output.

Under these specifications, there are $k'^{p^2}$ possible states that an application of the mechanism must consider. We have determined experimentally that $p=3$ and $k'=4$ acts as a reasonable configuration. Further increase of either value quickly renders the mechanism too computationally expensive for a standard desktop computer while further reduction of either value gives a poor approximation of SSIM. Although the use of 4 possible pixel intensities may seem very restrictive, it is important to note that the mechanism is applied independently to each channel of the image, just as SSIM is averaged over each channel. As a result, for an RGB image, each pixel can assume $4^3=64$ possible colour states. Given the large amount of noise typically required to obfuscate the images, we find this to be an acceptable level of precision.

\subsection{Privacy Configuration of the Exponential Mechanism}

We now explain how to configure and utilize the exponential mechanism in order to achieve $\epsilon$-differential privacy for image obfuscation.

\begin{theorem} \label{th:exp}
	Given an input image with $c$ channels and $n$ pixels in each channel, our proposed implementation of the exponential mechanism is able to produce an obfuscated image satisfying $\epsilon$-differential privacy through the application of the mechanism to each non-overlapping $p \times p$ pixel grid in each channel of the image using a privacy parameter of $\epsilon'=\frac{\epsilon p^2}{2nc}$ for each such application.
\end{theorem}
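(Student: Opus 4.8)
The plan is to reduce the theorem to the standard privacy analysis of a single application of the exponential mechanism and then compose across the grids, in direct analogy with the product-of-factors structure used in the proof of Theorem~\ref{th:lap}. Since the grids are non-overlapping, the full mechanism $K$ factors as a product of independent per-grid mechanisms, so for any pair of images $I_1, I_2$ and any output $R$ one can write $\Pr(K(I_1) = R) = \prod_g \Pr(K_g((I_1)_g) = R_g)$, where $g$ ranges over the $\frac{nc}{p^2}$ grid-channel applications and $(\cdot)_g$, $R_g$ denote the restrictions to grid $g$. Bounding the overall ratio $\Pr(K(I_1)=R)/\Pr(K(I_2)=R)$ then reduces to bounding each per-grid ratio and multiplying.

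First I would establish the per-grid guarantee. For a fixed candidate output grid, the quality function is the single-channel SSIM between the original $p \times p$ window and that candidate; because SSIM is bounded to the range $[0,1]$, its sensitivity as the input window varies is $\Delta q = 1$. I would then invoke the standard analysis of the exponential mechanism \cite{55}: writing $\Pr(I') \propto e^{\epsilon' q(I,I')}$ as in Formula~\ref{eq:exponential}, one factors the per-grid ratio of output probabilities into a numerator term $e^{\epsilon'(q(I_1,R_g) - q(I_2,R_g))} \le e^{\epsilon' \Delta q}$ and a normalizing-constant term that is likewise bounded by $e^{\epsilon' \Delta q}$, yielding a per-grid bound of $e^{2 \epsilon' \Delta q} = e^{2\epsilon'}$. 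Thus each grid-channel application satisfies $2\epsilon'$-differential privacy.

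Next I would compose. Taking adjacency so that $I_1$ and $I_2$ may differ across the entire image, every one of the $\frac{nc}{p^2}$ factors can contribute, so $\Pr(K(I_1)=R)/\Pr(K(I_2)=R) \le \prod_g e^{2\epsilon'} = e^{\frac{nc}{p^2} \cdot 2\epsilon'}$. Setting the exponent equal to $\epsilon$ and solving gives $\epsilon' = \frac{\epsilon p^2}{2nc}$, exactly the claimed parameter. Finally, the coarsening of pixel intensities to $k' < k$ levels and the windowed evaluation of SSIM only restrict or relabel the output space and so constitute post-processing, which cannot weaken the guarantee \cite{30}.

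The main obstacle I anticipate is pinning down the per-grid sensitivity of SSIM together with the accompanying factor of two, since these jointly fix the constant in $\epsilon'$: one must argue carefully that varying the input window (with the candidate output held fixed) changes the SSIM quality score by at most the bounded range of SSIM, and that the resulting mechanism is therefore $2\epsilon'$-differentially private rather than $\epsilon'$-differentially private. The composition step itself is then routine given the disjointness of the grids, provided one is explicit that adjacency here is taken over the whole image so that the per-grid costs sum rather than being dominated by a single term.
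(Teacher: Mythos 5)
Your proposal is correct and follows essentially the same route as the paper's proof: invoke the standard result that the exponential mechanism with parameter $\epsilon'$ gives $2\epsilon'\Delta q$-differential privacy, note that $\Delta q = 1$ because SSIM lies in $[0,1]$, and compose over the $\frac{nc}{p^2}$ non-overlapping grid applications to get $\frac{2nc\epsilon'}{p^2}$-differential privacy, solving for $\epsilon' = \frac{\epsilon p^2}{2nc}$. The only difference is that you unpack the two cited ingredients (the numerator/normalizer argument for the exponential mechanism bound and the product factorization behind composition, including the correct observation that whole-image adjacency makes the per-grid costs sum), where the paper simply cites \cite{55} and the composition theorem.
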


\begin{proof}
	
	An application of the exponential mechanism using a privacy parameter of $\epsilon'$ provides $2 \epsilon' \Delta q$-differential privacy \cite{55} where $\Delta q$ is the maximum possible change in the value of the quality function. Since the measure of SSIM takes on a value in the range $[0,1]$, the value of $\Delta q$ is 1. To obfuscate $n$ pixels over $c$ channels using an exponential grid covering $p \times p$ pixels on each application, we require $\frac{nc}{p^2}$ applications of the mechanism. By the composition theorem \cite{30}, this results in $\frac{2nc \epsilon'}{ p^2}$-differential privacy. Thus to enforce differential privacy for a user-specified privacy budget of $\epsilon$, we configure the mechanism to use $\epsilon'=\frac{\epsilon p^2}{2nc}$ at each application.
	
\end{proof}

For instances where an image has dimensions $w \times h$ such that one or both of the dimensions are not a multiple of $p$, we first apply the exponential mechanism within a $w' \times h'$ space where $w'$ and $h'$ are the largest multiples of $p$ such that $w' \leq w$ and $h' \leq h$. Then in the remaining $w-w'$ columns and $h-h'$ rows, we apply the Laplace mechanism of \cite{5}. 

As with the Laplace mechanism, our proposed application of the exponential mechanism requires a high degree of noise to achieve differential privacy. We therefore employ the pixelization trick of \cite{5} to help mitigate this. To do so, we first perform pixelization to create $b \times b$ grids of uniform pixel intensities. We then consider each such grid as a single cell of the $p \times p$ grid used by the exponential mechanism. While the mechanism still operates on $p^2$ cells, the exponential grid covers a $pb \times pb$ pixel space. In this way, using the same privacy budget $\epsilon$, we are able to increase the allocation of the budget for each application of the mechanism to $\epsilon'=\frac{\epsilon p^2 b^2}{2nc}$.

\section{Experiments} \label{sec:experiments}

In this section, we run a series of experiments to gain insight on the performance of our proposed methods of obfuscation in practice. In the context of obfuscation using generative models, we compare our proposed implementation of Section \ref{sec:implementation} to a $k$-same implementation following the design of $k$-same-net \cite{4}. We employ these comparisons to observe the relative performances of differential privacy and $k$-same obfuscation in terms of a trade-off between re-identification risk and utility. We additionally test the resilience of differential privacy against parrot attacks and composition attacks. Finally, we compare our implementations of the more general pixel-space exponential mechanism to the Laplace mechanism \cite{5}.

\subsection{Generative Model Training and Datasets}

For our generative NN implementation, we have built on top of the DeconvFaces \cite{25} network which implements the concept of up-convolution for the generation of images of input classes \cite{26}. We apply differential privacy as described in Sections \ref{sec:framework} and \ref{sec:implementation}. For the $k$-same obfuscation, we use the same generative NN implementation and, follow the approach of $k$-same-net \cite{4}, using clustering as described for $k$-same-m \cite{2}. This deviates from the use of a proxy gallery as described for $k$-same-net. It is important to note that, while a proxy gallery can reduce re-identification risk, it involves a step that is not captured by the $k$-same privacy guarantee. Thus, in the absence of a privacy guarantee that incorporates this detail, we omit the use of a proxy gallery in order to focus our experiments on the formalized aspects of the privacy guarantees. Similarly, although we have shown in Section \ref{sec:unseen} how a generative NN can be used to apply differential privacy to unknown identities, we use only identities from the training set in our experiments. Deviations from an original identity induced by the approximation process may lead to a further decrease in re-identification risk in a manner that is not captured by the formal privacy guarantee. We emphasize that this does not imply that obfuscated images of unknown identities are not protected by the privacy guarantee; merely that imperfect approximations of unknown identities may lead to empirical results that suggest a stronger level of privacy than what is in fact guaranteed by differential privacy. We have therefore made these choices in the interest of comparing strictly the privacy protection achieved due to the formalized aspects of the methods of obfuscation.

We have additionally compared differential privacy against $k$-same obfuscation using an AAM as the generative model. However, the results showed that an AAM that modifies only a tightly-cropped portion of the face serves as a poor generative model for facial obfuscation. As such, we do not include AAMs in our current experiments. The reader is referred to \cite{84} for these experiments.

We apply each method of obfuscation to two different datasets - RAFD \cite{16} and KDEF \cite{17}. These datasets provide frontal facial images of subjects wearing same coloured shirts. The use of same coloured shirts prevents bias in re-identification from the exploitation of information in unique clothing. The RAFD and KDEF datasets contain images of 67 and 70 subjects, respectively, and provide a variety of facial expressions. Due to apparent issues with lens exposure in the KDEF dataset, we have removed two of the subjects from our experiments.

The generative NN architecture accepts class vectors for identity and facial expression as input. The RAFD and KDEF datasets are therefore highly suitable for this network. We have trained the network for 1000 epochs on each of the datasets to obtain models capable of reproducing these identities. This training process is represented by Step 1 of Figure \ref{fig:testing_process}. An example of obfuscated output is shown in Figure \ref{fig:rafd_images}.

\begin{figure}[h]
	\centering
	\includegraphics[width=0.45\textwidth]{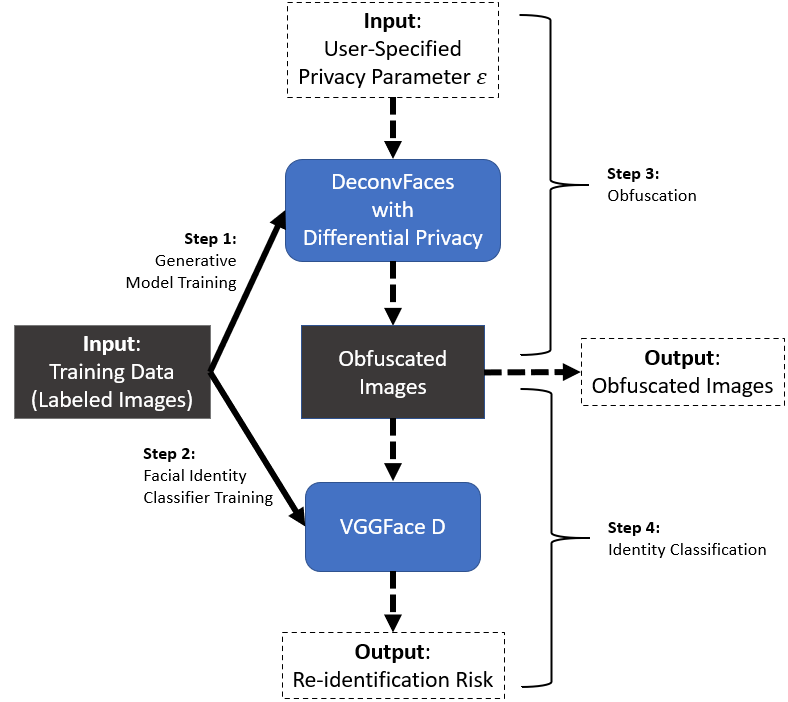}
	\caption{Illustration of the steps required for the training and testing process of the generative model and facial identity classification model used in our experiments.}
	\label{fig:testing_process}
\end{figure}

Although we do not employ the generative NN approximation of unknown identities in our experiments, we include an example of some approximations in Figure \ref{fig:unseen_approx} for reference. These approximations were produced by setting aside 10 identities from the RAFD dataset and training the generative NN on the remaining 57. We then ran a linear program for each of the 10 omitted identities to approximate them as a weighted sum of the 57 training identities. As the training set is relatively small, we expect that the quality of the approximation can be improved through the use of larger and more diverse training sets. We also note that any minor deviations in the approximated identity are not of great significance since the depicted identity is ultimately to be obfuscated.

\begin{figure*}[h]
	\centering
	\includegraphics[width=0.9\textwidth]{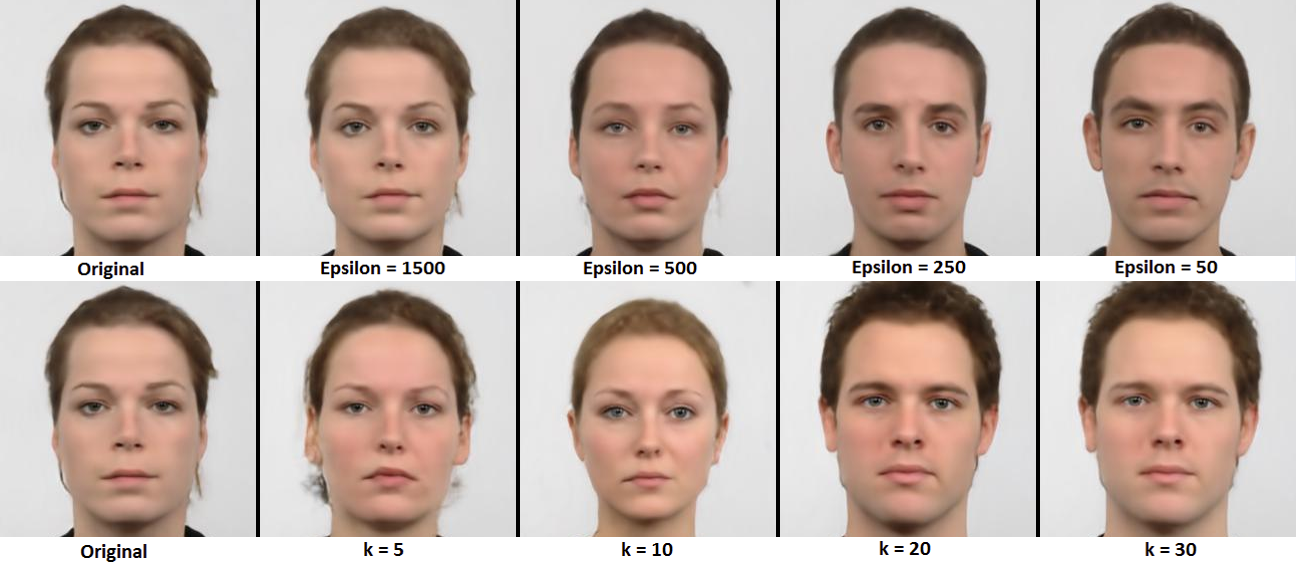}
	\caption{Obfuscation via the generative NN on the RAFD dataset. The top row employs differential privacy and the bottom row employs $k$-same obfuscation.}
	\label{fig:rafd_images}
\end{figure*}

\begin{figure*}
	\centering
	\includegraphics[width=1\textwidth]{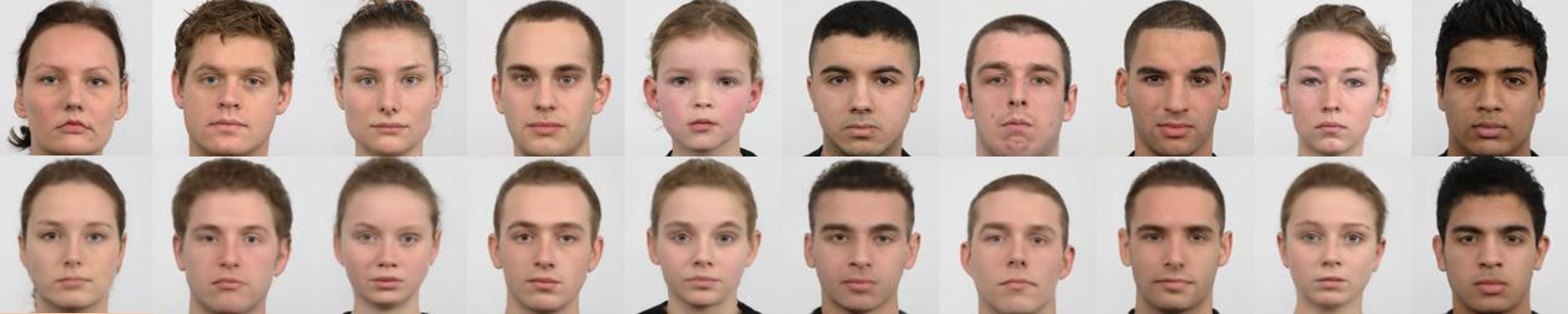}
	\caption{Approximation of unknown identities. Original images are shown in the top row with the corresponding approximations in the bottom row.}
	\label{fig:unseen_approx}
\end{figure*}

\subsection{Re-identification Risk} \label{rec:reidentification}

A good method of facial obfuscation must be able to produce obfuscated images which cannot be accurately re-identified. In other words, the risk that an obfuscated image can be associated with the identity of the originally depicted individual must be kept acceptably low. To measure re-identification risk, we have employed VGGFace D \cite{19}, a deep convolutional neural network which has been shown to achieve excellent facial identity classification accuracy. This simulates how an attacker might leverage machine learning models to launch an attack on obfuscated images. We have trained a separate model for each dataset, using the neutral and sad expressions for each identity for validation and the remaining expressions for training. Following a later released note about the network training \cite{44}, we employ Xavier initialization \cite{45} for the layer weights. To improve the robustness of the models, we have also augmented the datasets by creating two additional versions of each image - one with increased contrast and one with decreased contrast. The training of this network is represented by Step 2 of Figure \ref{fig:testing_process}. For reference, we provide the network architecture in Table \ref{tab:vgg} of Appendix A.

In our experiments, we generate obfuscated images having a neutral facial expression. We measure re\-/identification risk based on the accuracy of the top 1 guesses of the VGGFace network. Given that differential privacy is a stochastic process, for each combination of a privacy parameter and an identity to be protected, we have generated 10 obfuscated instances over which we take the average of the re-identification risk. We measure overall re-identification risk for a given privacy parameter as the average risk over all individuals in the dataset. The obfuscation and re-identification processes are respectively shown by Steps 3 and 4 in Figure \ref{fig:testing_process}. Since the $k$-same approaches are deterministic, we produce only a single output image per identity and then take the average re-identification risk over the whole dataset. We additionally measure the baseline identity classification accuracy on the original (i.e., unobfuscated) data for reference. The results are shown in Figure \ref{fig:privacy}.

\begin{figure*}
	\centering
	\includegraphics[width=1\textwidth]{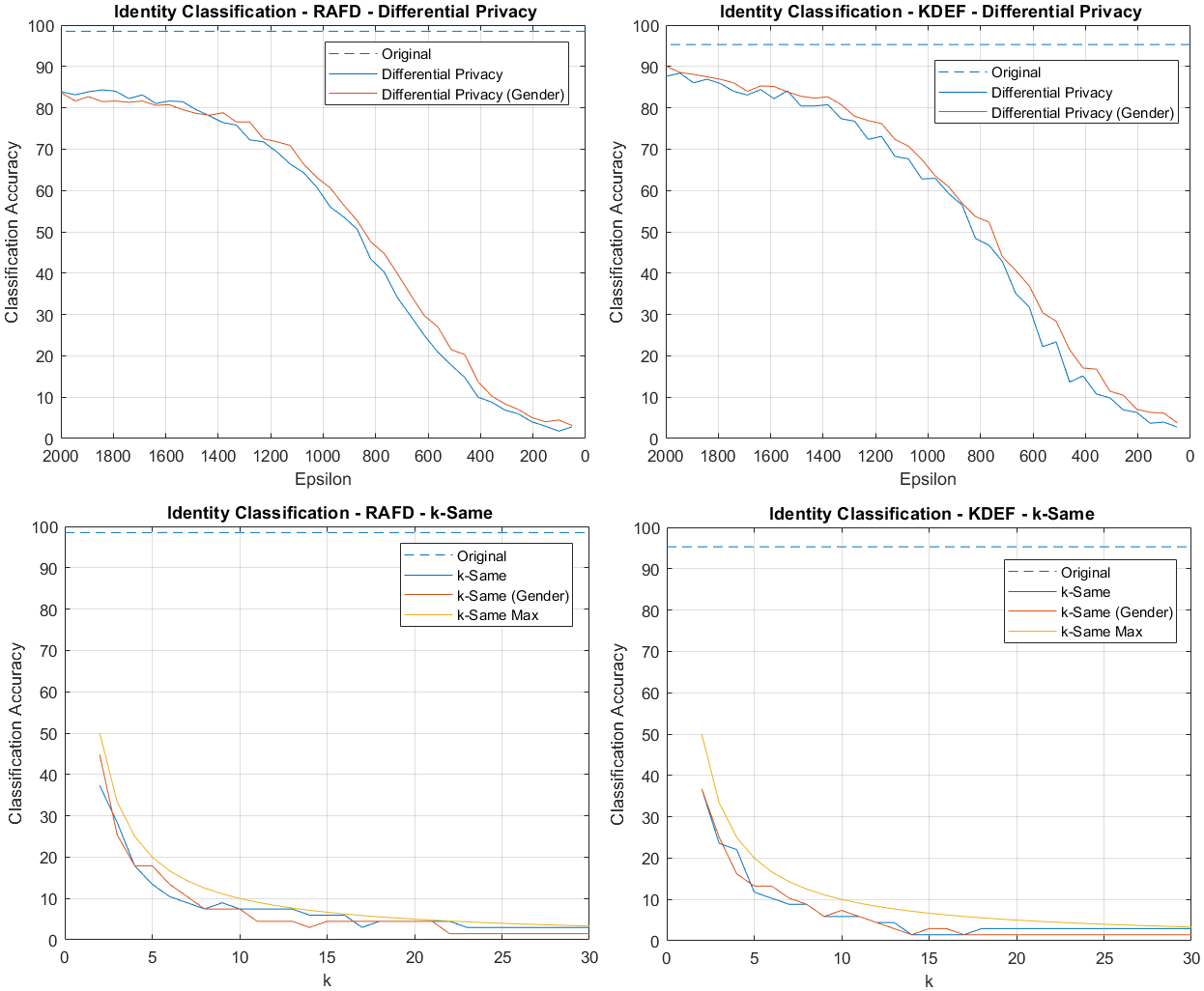}
	\caption{Identity classification accuracy for the methods of obfuscation}
	\label{fig:privacy}
\end{figure*}

Under differential privacy, a lower value of $\epsilon$ implies a stronger level of privacy whereas with $k$-same obfuscation, a higher value of $k$ implies stronger privacy. The plotted data in Figure \ref{fig:privacy} shows the expected trend of reduced re-identification risk as the level of privacy, determined by the respective privacy parameters, is strengthened. In contrast to the typical $\epsilon$ values applied to differentially private mechanisms for databases, the values used in our experiments may appear unusually high. The larger magnitude is simply a side-effect of the normalization for the model vector, resulting in the interpretation of $\epsilon$ on a different scale. Refer back to Section \ref{sec:priv} for discussion on the interpretation of $\epsilon$ in this context.

\subsection{Preservation of Utility}

To compare the methods of obfuscation in terms of utility, we focus on the ability to extract useful, non-sensitive information from the obfuscated output. Specifically, we measure classification accuracy for gender and facial expression in the obfuscated images. A favourable trade-off between privacy and utility occurs when facial identity is protected while simultaneously preserving other useful information. Thus, in our experimental setting, it is desirable to achieve low re-identification risk with high classification accuracy for the selected attributes of gender and facial expression.

We begin with the popular task of gender recognition \cite{67}. As forms of demographic classification may be desirable for data mining purposes, we consider high classification accuracy to reflect good utility. To this end, we employ a pre-trained model of convolutional neural network for the classification of gender in facial images \cite{21}. Our intent is to compare differential privacy and $k$-same obfuscation in terms of a privacy-utility trade-off. Yet, the two methods of obfuscation use proprietary privacy parameters which cannot be directly compared as a measure of privacy. We therefore plot the gender classification accuracy as a function of identity classification error in order to abstract away from the proprietary privacy parameters.

To highlight the ability of generative NNs to incorporate properties relevant to image utility into the network architecture, we have also created a modified version of the architecture that preserves gender in the obfuscated output. To do so, we have created an input layer having two classes that specify the gender in the image. By training a model with gender labels, it learns to separate features relevant to gender from those relevant to identity. This enables us to focus obfuscation only on the features relevant to identity while leaving the gender feature vector untouched. An example of gender-preserving obfuscation is shown in Figure \ref{fig:rafd_gen_images}. In the interest of a fair comparison between the methods of obfuscation, we employ the modified architecture both for differential privacy as well as $k$-same obfuscation.

\begin{figure*}
	\centering
	\includegraphics[width=0.9\textwidth]{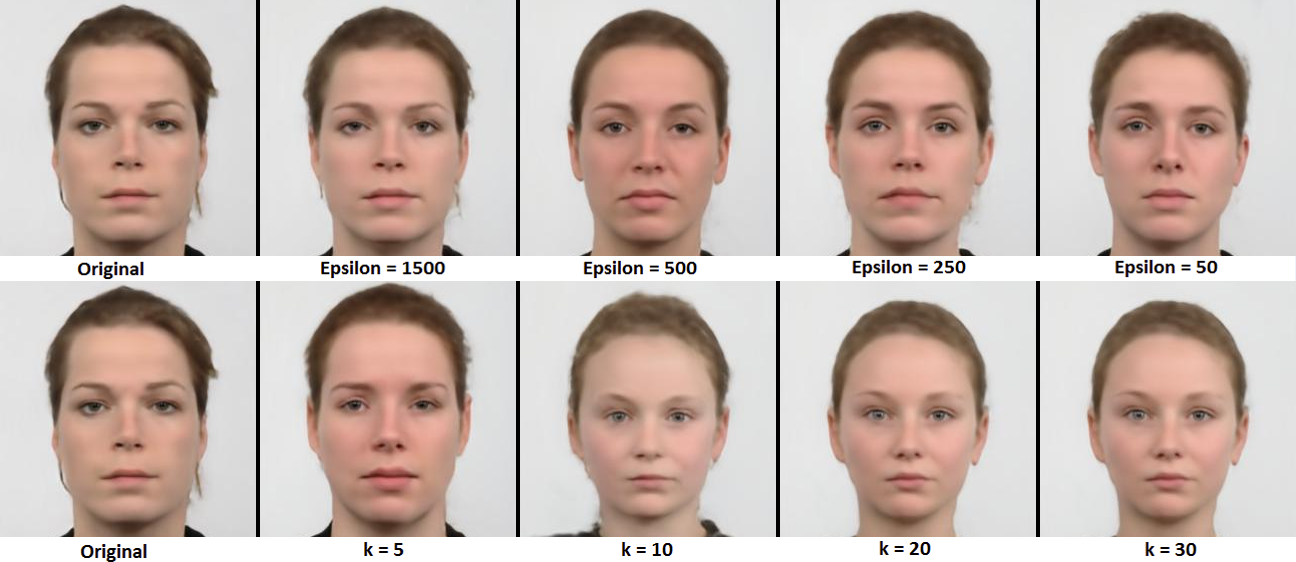}
	\caption{Gender-preserving obfuscation via the generative NN on the RAFD dataset. The top row employs differential privacy and the bottom row employs $k$-same obfuscation.} \label{fig:rafd_gen_images}
\end{figure*}

\begin{figure*}
	\centering
	\includegraphics[width=1\textwidth]{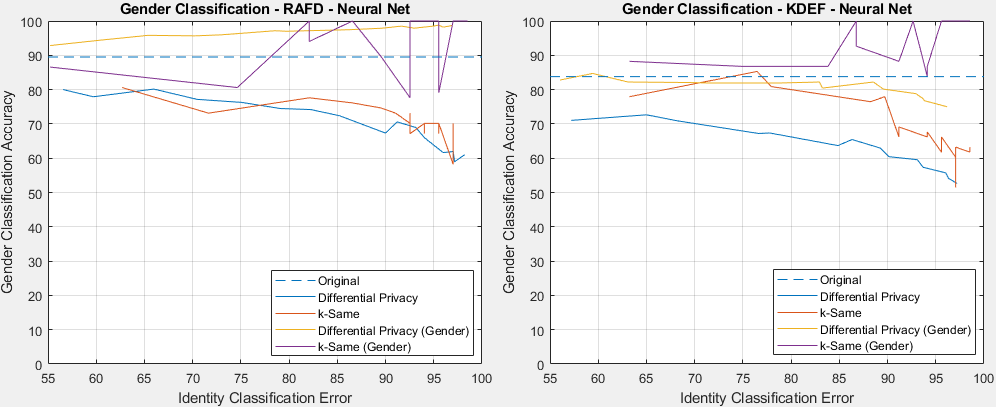}
	\caption{Gender classification accuracy for the methods of obfuscation}
	\label{fig:utility}
\end{figure*}

We note that very recent work \cite{75} has shown the potential for learned representations of high-level features in images to leak sensitive information that can be exploited for unintended inferences. This opens the door to potential leakage of facial identity via features released in an unobfuscated format. Yet, this has been studied in the context of image representations extracted from network layers prior to the final classification output of the model. Such layers retain relatively rich representations of the input which may be used for multiple purposes. We expect that any leakage associated with the extremely coarse features we leave unobfuscated (e.g., gender as a binary input), would have a negligible impact on re-identification risk beyond the explicit revelation (e.g., narrowing candidates based on their gender). To provide some empirical evidence to this effect, we have plotted both the basic and gender-preserving models of the generative NN in the identity classification graphs shown in Figure \ref{fig:privacy}. In all cases, the plot for the gender-preserving model demonstrates only minor deviation from the original model. This deviation is much more likely due to the explicit depiction of gender than the leakage of any additional information.

\begin{figure*} 
	\centering
	\includegraphics[width=0.9\textwidth]{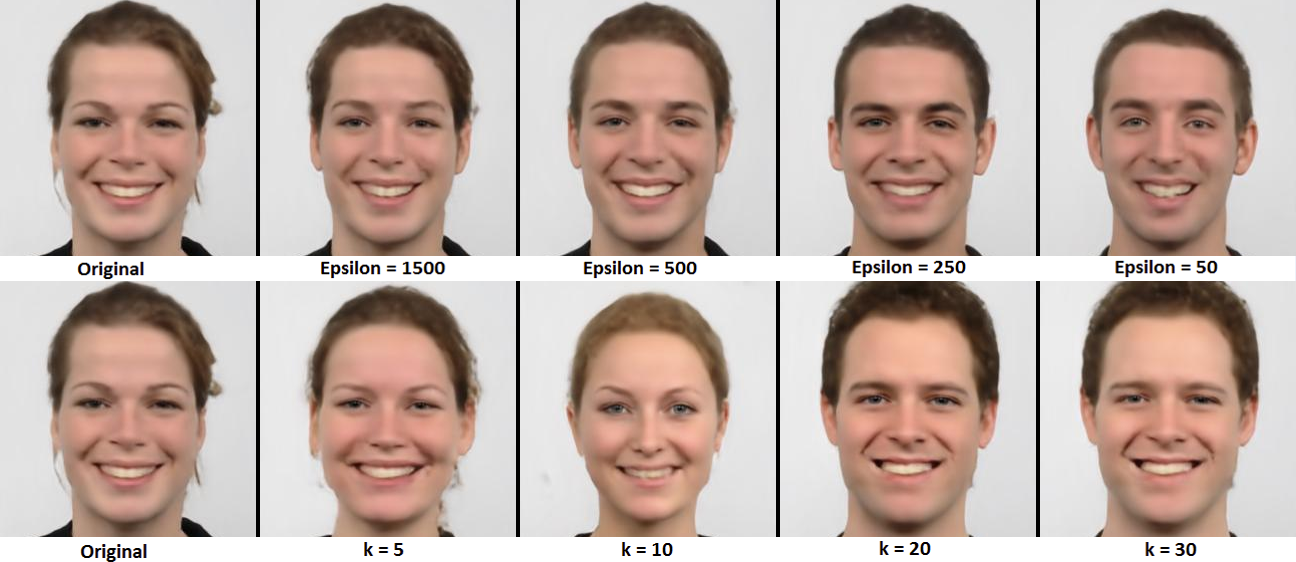}
	\caption{Expression-preserving obfuscation via the generative NN on the RAFD dataset. The top row employs differential privacy and the bottom row employs $k$-same obfuscation.} \label{fig:rafd_em_images}
\end{figure*}

\begin{figure*}
	\centering
	\includegraphics[width=1\textwidth]{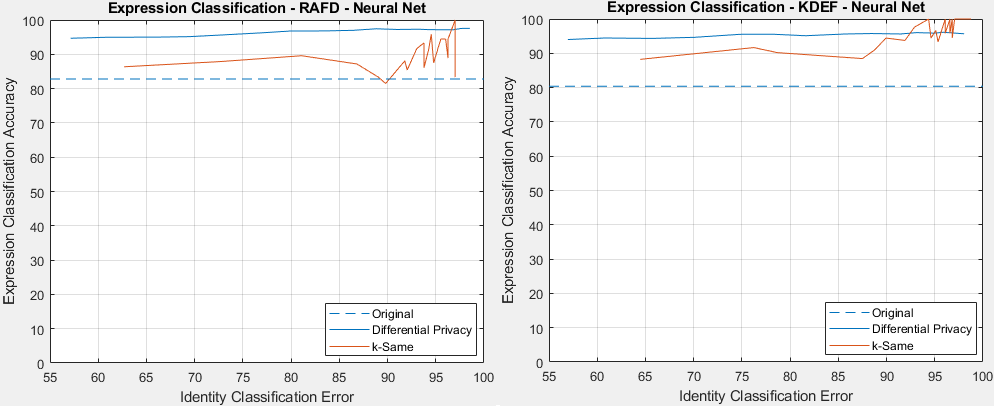}
	\caption{Facial expression classification accuracy for the methods of obfuscation}
	\label{fig:utility_emotion}
\end{figure*}

The results of the gender classification comparisons are shown in Figure \ref{fig:utility}. From these results, we see that the basic models for differential privacy and $k$-same obfuscation suffer a degradation in classification accuracy as the level of privacy is	strengthened (i.e., as identity classification error increases). Comparing the gender-preserved models to their basic counterparts, we see a large improvement in the classification accuracy, suggesting that this is an effective approach for the preservation of specific properties in the obfuscated output. In some cases, the classification accuracy of the obfuscated images has surpassed that of the original data. This is a result of the explicit specification of gender labels in the network input which can lead to obfuscated identities that more prominently display these features.

Another common task is the detection of facial expressions in images \cite{64}. To compare utility in this context, we measure the classification accuracy for facial expressions by using pre-trained neural network models \cite{56} intended for this task. For these experiments, we perform classification for six expressions: happiness, sadness, surprise, disgust, fear and a neutral expression. As the pre-trained models have different strengths and weaknesses with respect to their abilities to classify different expressions, we have employed them in combination to achieve high levels of classification accuracy on our data. We have found that the best results are achieved by first using the Model-4 \cite{56} architecture trained on the RAFDB \cite{65} dataset to detect disgust, followed by the Model-4 architecture trained on the SFEW \cite{66} dataset to detect fear. If neither of these expressions were detected, we then take the sum of the two vectors of model predictions and select the highest prediction as the detected expression. As the models are not trained to detect contempt and give very poor accuracies for detection of anger in our datasets, we exclude these two expressions from our experiments.

For each expression, we generate a full set of obfuscated images as described in the experiments for re-identification risk (Section \ref{rec:reidentification}), providing the expression label as input to the network in order to apply the chosen expression to the obfuscated output. An example of expression-preserving output is shown in Figure \ref{fig:rafd_em_images}. We measure both identity classification accuracy and expression classification accuracy as the average over all six expressions for each privacy parameter used in both differential privacy and $k$-same obfuscation. We then plot the expression classification accuracy as a function of the identity classification accuracy in order to once again abstract away from the privacy parameters and compare the utility of the two methods of obfuscation. The results are shown in Figure \ref{fig:utility_emotion}. As with the experiments on gender classification accuracy, when the chosen attributes are expressly preserved by the network, we observe classification accuracies that exceed the baseline accuracies measured on the unobfuscated datasets. This is again likely due to the network displaying the relevant features more prominently than the original images, making the task of the classification network easier. In these experiments, differential privacy generally shows better utility than $k$-same obfuscation.

Across both utility experiments, the overall comparison between the information preserved in differential privacy and $k$-same obfuscation appears to be inconclusive in these results. Some experiments show better results for differential privacy while other experiments show better results for $k$-same obfuscation. The $k$-same results are also more difficult to assess given the sporadic nature of the plots. This is likely due to changes in clusters between each level of obfuscation which can greatly impact classification accuracy. It is clear that the utility is also data-dependent given the variations in the results seen on the two datasets. Notably, many subjects in the KDEF dataset, including some males, have long hair whereas all subjects in the RAFD dataset have short hair. The males with long hair in KDEF may have contributed to the lower gender classification accuracy.

\subsection{Parrot Attacks}

Methods of $k$-same obfuscation are resistant to parrot attacks as a direct consequence of the process of obfuscation. Differential privacy, due to its resistance to breaches via post-processing of obfuscated data, similarly possesses a theoretical resistance to parrot attacks. Yet, the implications of parrot attacks in practical terms are less clear. A standard classification network applied to images obfuscated via differential privacy is unlikely to perform as well as it could if it exploited public knowledge about how randomization mechanisms function. An attacker could instead train a classification network for facial recognition using instances of (differentially private) obfuscated images as the training set. In this way, the network might achieve higher classification accuracy than a network trained on unobfuscated instances as it has learned to better identify features in the presence of noise. Yet, differential privacy is a stochastic method of obfuscation, so beyond the presence of noise and its approximate magnitude, there is little the network can learn in terms of predictability of differentially private output. We therefore hypothesize that a sufficiently high degree of noise induced by a differentially private mechanism can render the practical implications of a parrot attack negligible.

To observe the degree to which re-identification risk is impacted on differentially private output by parrot attacks, we have trained the VGGFace network for classification of obfuscated instances at specific privacy parameter ($\epsilon$) values. This requires training a separate model for each privacy parameter value on each dataset. We then compare the classification accuracy achieved by the parrot attacks to the accuracy of the models trained on the unobfuscated instances (Figure \ref{fig:parrot}). At high values of $\epsilon$, there is an increase in classification accuracy for the parrot attacks as the network has become better suited to ignoring small amounts of noise. However, as the value of $\epsilon$ decreases, the gap rapidly closes and the trend reverses (at roughly $\epsilon = 300$ for the datasets we have employed), with the parrot attack showing lower classification accuracy than the model trained on unobfuscated data. This is likely due to the higher magnitude of noise destroying many of the useful patterns that the network otherwise learns in the training data. As a result, we expect that for reasonable configurations of privacy parameters that would be used in practice, parrot attacks would provide little, if any, advantage to an attacker.

\begin{figure*}
	\centering
	\includegraphics[width=1\textwidth]{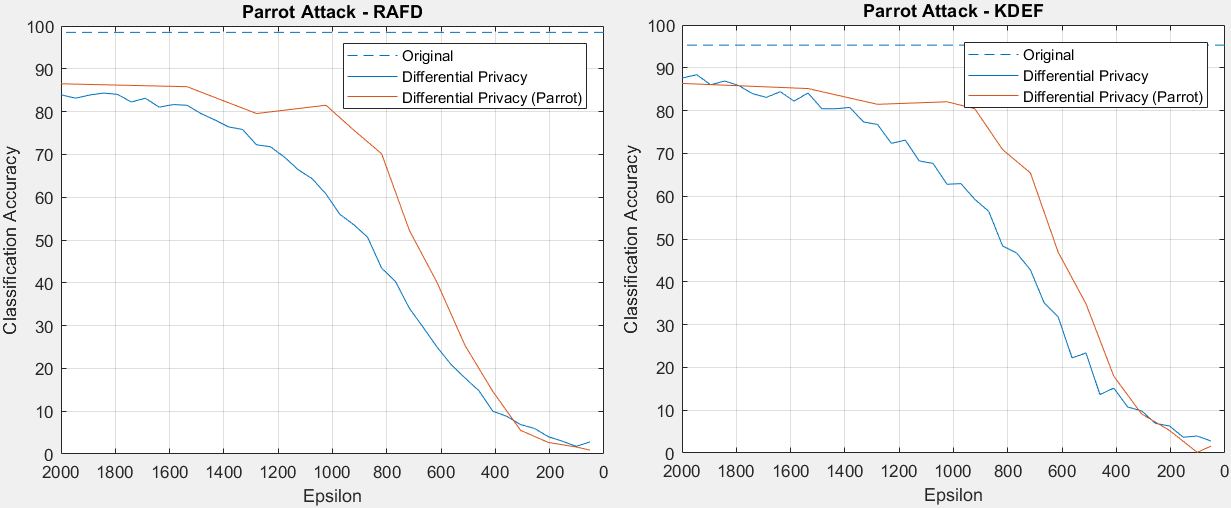}
	\caption{Identity classification accuracy for a parrot attack}
	\label{fig:parrot}
\end{figure*}

\begin{figure*}
	\centering
	\includegraphics[width=1\textwidth]{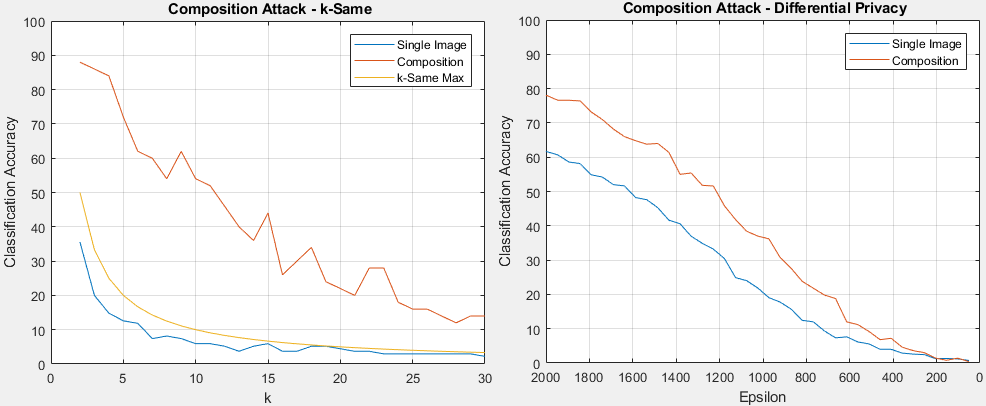}
	\caption{Identity classification accuracy for a composition attack}
	\label{fig:composition}
\end{figure*}

\subsection{Composition Attacks}

One of the key advantages that differential privacy provides over $k$-anonymity is the property of secure composition, which ensures that the privacy guarantee is never violated in the scenario of uncoordinated releases of sensitive data. To demonstrate the resilience of our proposed method of obfuscation against composition attacks, we simulate such an attack and measure the identity classification accuracy for both our implementation of differential privacy and $k$-same obfuscation.

We consider a scenario in which an image is uploaded to two different platforms, each of which apply facial obfuscation in an uncoordinated manner. Through observation of the non-obfuscated portions of the image (e.g., the background of the image), an attacker could determine that the two obfuscated images originally depicted the same individual, enabling them to perform a composition attack. To simulate this, we split a dataset into two subsets of equal size. We select the subsets such that they have a non-empty intersection but also contain many identities that are not found in the other subset. We then train a generative network model on each of the subsets of individuals. This provides us with two models which have some identities in common but which will obfuscate those identities in different ways due to having been trained on differing subsets. These models represent two organizations which will perform obfuscation in an uncoordinated manner. We then use the models to obfuscate only the identities that are present in both subsets. The two obfuscated images produced for each identity represent the two images that an attacker would examine in a composition attack.

As the RAFD and KDEF datasets are too small to achieve any reasonable diversity of identities once they are further reduced to subsets, we combine both datasets together and draw the subsets from the combined dataset. To mitigate bias in identity classification from differences in the controlled settings of the two original datasets, we first crop all images to capture only the faces and then adjust the saturation and contrast of the images to match more closely. From the combined dataset of 135 individuals, we select 50 individuals to be shared across both subsets and split the remaining individuals evenly between them. We match the male to female ratios of both subsets to the full set of images but otherwise select the individuals for each subset at random.

To perform a composition attack on a pair of obfuscated images, we first provide both images as input to the VGGFace network and add together the two resultant vectors of prediction values for the full set of identities. For $k$-same obfuscation, we then take the intersection of the identities from each cluster used by the two models during obfuscation and select the identity within this intersection that has the highest prediction value from the summed vectors. In practice, an attacker will not necessarily know the exact identities used in each cluster with certainty but could likely determine them with reasonable accuracy by taking the top $k$ predictions from their facial recognition network on each of the obfuscated images. By using the exact identities in our experiments, we effectively test the worst-case scenario.

When launching a composition attack against differentially private output, the attacker no longer has the concept of clusters of identities to use to their advantage. The original image may have depicted any of the potential identities. The attacker may still use the two images to increase the accuracy of their prediction but the additional information does not provide them with any means to violate the differential privacy guarantee. To simulate an attacker using the additional information in this context, we select the highest prediction value from the summed vectors of predictions for each pair of images. Since we generate ten instances of obfuscated output for each identity, we average the prediction accuracy over all ten pairs of images for each identity.

The results are shown in Figure \ref{fig:composition}, comparing the re-identification risk in each method of obfuscation for a single obfuscated image to the risk from a pair of images on which a composition attack has been performed. With the $k$-same obfuscation, we observe a significant increase in re-identification risk which greatly exceeds the theoretical maximum value of $\frac{1}{k}$. This clearly demonstrates a violation of the $k$-same privacy guarantee. With differential privacy, we also observe an increase in re-identification risk, as would be expected due to the additional information provided to the attacker, however, the property of secure composition ensures that the privacy guarantee is preserved. Furthermore, the gap between the single image re-identification risk and composition risk is less significant on the differentially private images than the $k$-same images, and becomes marginal at high levels of privacy (e.g., when $\epsilon < 300$ in Figure \ref{fig:composition}). Since privacy parameters that are conducive to a high level of privacy would typically be selected in most realistic scenarios, we expect that the degradation in privacy due to multiple instances of differentially private output being distributed in practice would be minimal.

\subsection{Privacy-Utility Trade-offs in Pixel-Space}

In this section, we compare our proposed use of a pixel-space exponential mechanism to an existing implementation \cite{5} of a Laplace mechanism in pixel-space. In these experiments, we employ the FaceScrub dataset \cite{85} to reflect the ability of pixel-space obfuscation to handle diverse image content. FaceScrub consists of a collection of facial images spanning roughly 500 individuals in diverse conditions with respect to background content, lighting, pose, etc. We resize all images to $128 \times 128$ pixels and randomly select ten images per individual to act as a training dataset for a facial classification network and three images per individual to act as the testing set. Given that differentially private obfuscation is a stochastic process, we produce three obfuscated instances per test image resulting in a total of nine obfuscated images per identity. We re-iterate that pixel-space obfuscation requires neither training data nor a generative model for the data. The use of training data in our experiments is strictly to train a classification network in order to measure re-identification risk and report on the performance of the obfuscation mechanisms.

The work in \cite{5} proposes the combination of pixelization with differential privacy in order to better manage the privacy budget. We therefore incorporate this into our experiments by using three different pixelization grid sizes: $4 \times 4$, $8 \times 8$, and $16 \times 16$. As it is known that blurring a pixelized image can improve the ability of humans to recognize the image content \cite{27}, we additionally test versions of the images that have been blurred as a post-processing step after obfuscation has been applied. We blur the images using a Gaussian kernel with a standard deviation of 1. Recall that post-processing cannot impact the differential privacy guarantee \cite{30}. However, this does not preclude the possibility of an impact on both re-identification risk and utility.

\subsubsection{Privacy Budget Comparisons}

We begin by examining the impact of pixelization and blurring on the obfuscated output. The three pixelization settings combined with a boolean option for blurring results in six potential configurations for each of mechanisms. We first compare the variants of the exponential mechanism separately from the variants of the Laplace mechanism in order to clearly observe the impact of these settings within the same class of mechanism. The results of this comparison are shown in Figure \ref{fig:epsilon_ssim}. Here we plot SSIM as a function of the privacy budget $\epsilon$ (i.e., the composition of the privacy parameters over all applications of the mechanism needed to obfuscate an image). The SSIM values are calculated as the average score over all obfuscated instances. Lower values on the x-axis represent a stronger privacy guarantee and higher values on the y-axis represent better utility. We additionally show examples of obfuscated images in Figure \ref{fig:pixel_examples}.

\begin{figure*}
	\centering
	\includegraphics[width=0.95\textwidth]{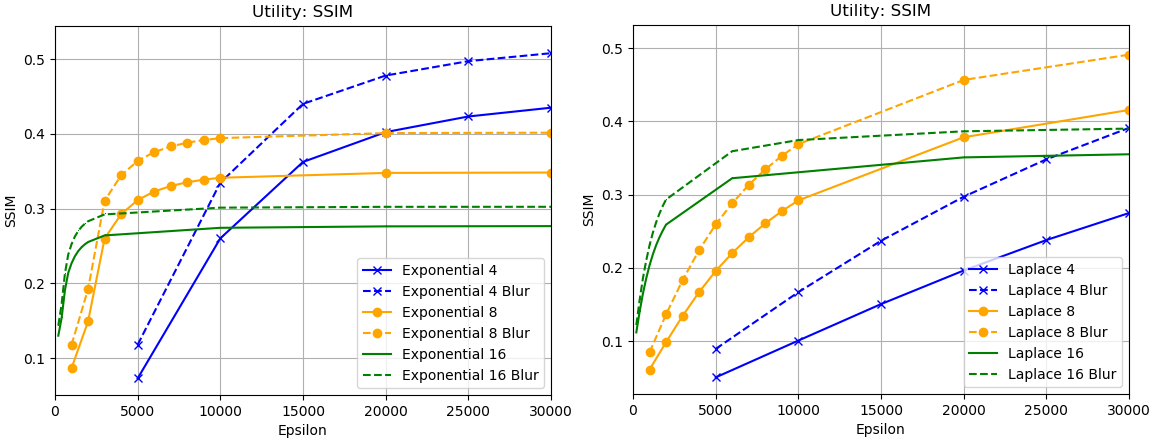}
	\caption{A comparison of the SSIM achieved by the mechanisms over a wide range of privacy budgets. The left graph compares variants of the exponential mechanism and the right graph compares variants of the Laplace mechanism.}
	\label{fig:epsilon_ssim}
\end{figure*}

\begin{figure*}
	\centering
	\includegraphics[width=0.6\textwidth]{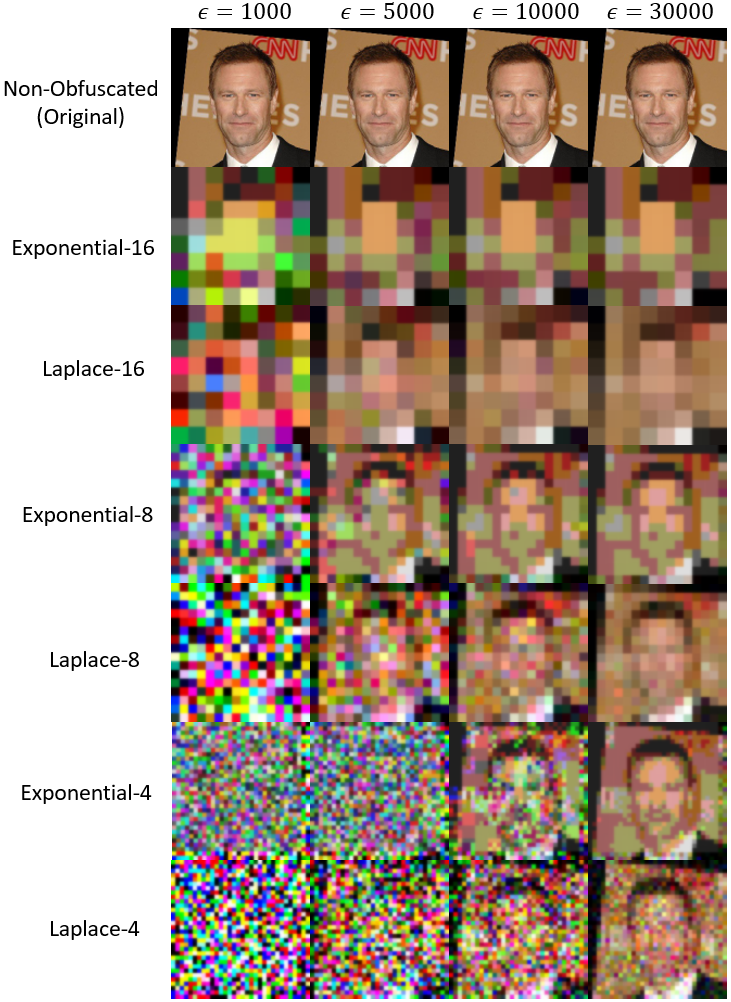}
	\caption{A visual comparison of obfuscated instances produced by the blur variants of both mechanisms at all 3 pixelization settings.}
	\label{fig:pixel_examples}
\end{figure*}

From the results of Figure \ref{fig:epsilon_ssim} we note two main trends that hold for both the exponential and the Laplace mechanism. First, for any given pixelization setting, the use of blurring always offers a higher level of utility than the non-blurred configuration. This can be seen by the dashed plots representing the blur variants which universally result in higher SSIM than their non-blur counterparts shown by the solid line plots. The second common trend is that stronger pixelization (i.e., larger pixelization grid sizes) provides higher SSIM for strong levels of privacy (i.e., low values of $\epsilon$), while weaker pixelization provides higher SSIM for weak levels of privacy. The better performance of strong pixelization for small values of $\epsilon$ follows from the usefulness of pixelization as a means to reduce the query sensitivity, leading to better management of the privacy budget. The reversal of this trend for large values of $\epsilon$ is a result of the minimal amount of noise that is added by the mechanisms for such weak privacy parameters. Due to this, the predominant modification to the images is the pixelization rather than the application of noise, leading to worse utility in the stronger levels of pixelization. Similar results can be seen in Figure \ref{fig:epsilon_mse} of Appendix B using the mean squared error (MSE) of pixel intensities as an alternate measure of utility. Note that in the case of MSE, lower values indicate better utility.

Given that the intent is to provide a meaningful level of privacy through differentially private obfuscation, these large privacy budgets are of little practical value. This can be easily confirmed by visual inspection of the obfuscated examples at high values of $\epsilon$ in Figure \ref{fig:pixel_examples}. Consequently, strong pixelization with small values of $\epsilon$ should be used in practice to achieve meaningful levels of privacy while attaining the best trade-off with respect to utility. Furthermore, the blurring operation appears to always be beneficial with respect to utility in the obfuscated output.

We next turn to a comparison between the exponential mechanism and the Laplace mechanism. Using only the blurred variants, we compare the two mechanisms at all three levels of pixelization. The results are shown in Figure \ref{fig:epsilon_ssim_both}-a. For any given privacy budget $\epsilon$, the highest plotted mechanism on the y-axis reflects the best performance. The exponential mechanisms, plotted as dashed lines, almost exclusively make up the upper envelope of the plots. This demonstrates a consistently stronger performance from our proposed mechanism in comparison to the Laplace mechanism. Furthermore, since we are primarily interested in the privacy/utility trade-off for low values of $\epsilon$ which provide a meaningful privacy guarantee, we show a zoomed in comparison of the two mechanisms using a pixelization setting of 16 in Figure \ref{fig:epsilon_ssim_both}-b. This comparison shows a notable improvement in SSIM from the exponential mechanism in comparison to the Laplace mechanism for strong levels of privacy (i.e., $\epsilon < 1000$). Analogous comparisons are shown for MSE in Figure \ref{fig:epsilon_mse_both} of Appendix B.

\begin{figure*}
	\centering
	\includegraphics[width=0.95\textwidth]{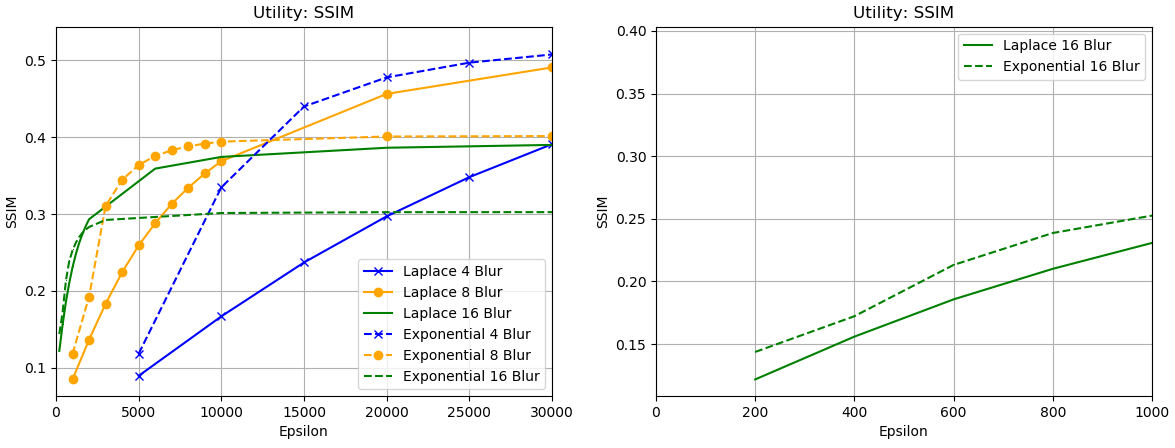}
	\caption{A comparison between the blur variants of the exponential and Laplace mechanism with respect to utility measured using SSIM. The left graph compares all three pixelization settings for the mechanisms and the right graph focuses on the mechanisms using a pixelization grid of size 16 for efficient use of low privacy budgets.}
	\label{fig:epsilon_ssim_both}
\end{figure*}

\begin{figure*}
	\centering
	\includegraphics[width=0.95\textwidth]{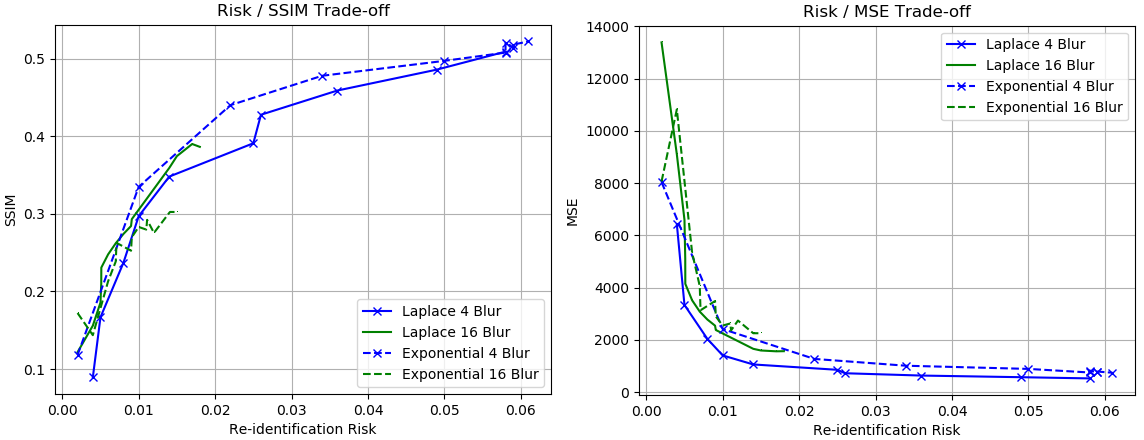}
	\caption{A comparison of the re-identification risk and utility trade-off of the mechanisms. Re-identification risk is measured using the average classification accuracy of a facial identity classification network.}
	\label{fig:pixel_risk}
\end{figure*}

Despite the better performance achieved via the use of strong pixelization and blurring for low values of $\epsilon$, the scale on which $\epsilon$ is interpreted remains quite large. Yet, given the obfuscated examples of Figure \ref{fig:pixel_examples}, it is clear that these values of $\epsilon$ provide a strong level of obfuscation. These results suggest that the interpretation of the privacy guarantee using concepts of adjacency and query sensitivity may not be appropriate for pixel-space obfuscation. As with our proposed approach to obfuscation via generative models, the privacy guarantee is likely better interpreted in pixel-space using a distance-based generalization of differential privacy. We leave further investigation of this topic as an open question.

\subsubsection{Risk/Utility Trade-offs}

As a final comparison of the mechanisms, we use a facial identity classification network to measure re-identification risk. To this end, we employ FaceNet \cite{86}, a network which has been demonstrated to provide high levels of identity classification accuracy even when presented with low resolution or otherwise poor quality images. This property makes it a good candidate for re-identification of images obfuscated in pixel space, which are subject to a reduction in resolution due to pixelization and heavy distortion due to the addition of noise. For reference, we provide the network architecture in Table \ref{tab:facenet} of Appendix A. For each combination of a mechanism type, pixelization setting, and privacy budget, we train the network using the training partition of the dataset subject to obfuscation via the selected mechanism configuration. As with the testing data, we produce three obfuscated instances per image, leading to a total of 30 training instances per identity. This allows us to use the classification network to launch a parrot attack on the obfuscated data.

Results are shown in Figure \ref{fig:pixel_risk}. We plot utility as a function of re-identification risk which is measured as the average identity classification accuracy over the obfuscated testing images. A desirable privacy/utility trade-off is reflected by low re-identification risk and high utility. In Figure \ref{fig:pixel_risk}-a, we measure utility as SSIM where high values on the y-axis indicate higher utility and in Figure \ref{fig:pixel_risk}-b, we measure utility using MSE where low values on the y-axis indicate higher utility. In both graphs, we show the mechanisms using pixelization settings of 4 and 16. We omit the pixelization setting of 8 for better visual clarity due to the clutter caused by the closeness of the plots. Note that the plots for the variants using a pixelization setting of 16 all remain beneath 2\% re-identification risk. This is due to the strong level of pixelization which prevents effective re-identification even in the absence of noise added by the mechanisms.

Interestingly, the variants using a pixelization setting of 4 provide the best utility, with the exponential mechanism providing the best performance with respect to SSIM and the Laplace mechanism providing the best performance with respect to MSE. However, as we have noted in the previous experiments, the variants using a pixelization setting of 4 have a worse trade-off between the privacy budget and the level of utility. Due to this, the better performance observed in Figure \ref{fig:pixel_risk} should be considered with caution as the higher values of $\epsilon$ imply worse theoretical properties of privacy protection despite the low re-identification risk. It is, for example, possible that the use of different classification networks which exploit different aspects of the obfuscated images may show different trends in the trade-off between re-identification risk and utility. We leave further investigation into the measure of re-identification risk in practice as future work.

\subsubsection{Summary of Results}

Based on the results of our experimental comparisons, we recommend the use of the exponential mechanism along with a strong level of pixelization (e.g., a pixelization grid size of 16) to achieve high levels of utility in both SSIM and MSE while using a low privacy budget in order to enforce a strong level of privacy. Furthermore, blurring can be applied to the obfuscated results to improve the level of utility. If one is strictly interested in a practical interpretation of privacy as re-identification risk, the use of a weaker level of pixelization appears to have the potential to provide a better risk/utility trade-off. However, we caution readers that this comes at the cost the strong theoretical properties of privacy protection that would otherwise be achieved via a low value of $\epsilon$.

\section{Conclusions and Future Work}

We have studied how to obtain a formalized privacy guarantee for the obfuscation of facial images in practice. We have identified shortcomings of the $k$-same privacy guarantee including susceptibilities to background knowledge and composition attacks as well as the awkwardness in the requirement for a gallery of input images. To improve upon this, we have proposed the use of differential privacy in the context of obfuscation applied to generative models for images. We have developed a framework that provides a meaningful privacy guarantee for such models and we have derived the configuration of Laplace mechanism that can achieve this privacy guarantee. Our approach preserves the privacy guarantee in the presence of attackers with background knowledge, provides resistance to composition attacks and removes the requirement for a gallery of input images. We have also proposed the use of a more general mechanism to obfuscate any image directly in pixel-space. This allows for greater versatility in the obfuscation of images.

We have implemented our proposed mechanisms as well as the competing approaches discussed in the paper for experimental comparisons. In our experiments on pixel-space mechanisms, we have demonstrated improvements in measures of visual quality for our proposed use of an exponential mechanism over the Laplace mechanism. In our experiments using generative models, we have implemented both our proposed framework as well as $k$-same obfuscation. Through our comparisons, we have demonstrated the resilience of differential privacy against parrot and composition attacks. Furthermore, we have shown that this application of differential privacy can achieve comparable utility to $k$-same obfuscation. We conclude that the key improvements in the privacy guarantee combined with comparable levels of utility make differential privacy a much more appropriate choice for the obfuscation of facial images.

In this work, we have focused on introducing a framework and examining its theoretical and practical properties for frontal facing images taken in a controlled environment using standard facial image datasets. In future work, this could be extended to studying generative adversarial networks for the generation of detailed images drawn from more complex distributions. Furthermore, we posit that the approximation of unknown identities can be improved upon through the use of auto-encoder style architectures seen in some generative networks.

We also leave as an open problem the potential for improvements to the exponential mechanism for pixel-space obfuscation. Given that our proposed implementation is limited both in the size of the exponential grid and the precision of the pixel intensities, there is room for improvement in the quality of the obfuscated output if one can find more effective methods to handle the computational complexity of the mechanism. Additionally, investigation of SSIM variants for use as the quality function may lead to improved preservation of visual quality in the obfuscated output. For example, the complex wavelet SSIM \cite{87}, which is robust to small changes in translation and scaling, could be explored for use in this context.

\bibliographystyle{ieeetr}
\bibliography{bibliography}

\clearpage

\appendix

\section*{Appendix A - Facial Classification Network Architectures}

\begin{table} [!h]
	\centering
	\begin{tabular}{ | c | c | }
		\hline
		\textbf{Layer Type} & \textbf{Settings} \\ \hline
		Conv & Size: 3x3, Filters: 64, Stride: 1, Pad: 1 \\ \hline
		ReLU & \\ \hline
		Conv & Size: 3x3, Filters: 64, Stride: 1, Pad: 1 \\ \hline
		ReLU & \\ \hline
		Max Pool & Size: 2x2, Stride: 2, Pad: 0 \\ \hline
		Conv & Size: 3x3, Filters: 128, Stride: 1, Pad: 1 \\ \hline
		ReLU & \\ \hline
		Conv & Size: 3x3, Filters: 128, Stride: 1, Pad: 1 \\ \hline
		ReLU & \\ \hline
		Max Pool & Size: 2x2, Stride: 2, Pad: 0 \\ \hline
		Conv & Size: 3x3, Filters: 256, Stride: 1, Pad: 1 \\ \hline
		ReLU & \\ \hline
		Conv & Size: 3x3, Filters: 256, Stride: 1, Pad: 1 \\ \hline
		ReLU & \\ \hline
		Conv & Size: 3x3, Filters: 256, Stride: 1, Pad: 1 \\ \hline
		ReLU & \\ \hline
		Max Pool & Size: 2x2, Stride: 2, Pad: 0 \\ \hline
		Conv & Size: 3x3, Filters: 512, Stride: 1, Pad: 1 \\ \hline
		ReLU & \\ \hline
		Conv & Size: 3x3, Filters: 512, Stride: 1, Pad: 1 \\ \hline
		ReLU & \\ \hline
		Conv & Size: 3x3, Filters: 512, Stride: 1, Pad: 1 \\ \hline
		ReLU & \\ \hline
		Max Pool & Size: 2x2, Stride: 2, Pad: 0 \\ \hline
		Conv & Size: 3x3, Filters: 512, Stride: 1, Pad: 1 \\ \hline
		ReLU & \\ \hline
		Conv & Size: 3x3, Filters: 512, Stride: 1, Pad: 1 \\ \hline
		ReLU & \\ \hline
		Conv & Size: 3x3, Filters: 512, Stride: 1, Pad: 1 \\ \hline
		ReLU & \\ \hline
		Max Pool & Size: 2x2, Stride: 2, Pad: 0 \\ \hline
		FC & Size: 7x7, Filters: 4096, Stride: 1, Pad: 0 \\ \hline
		ReLU & \\ \hline
		Dropout & Rate: 0.5 \\ \hline
		FC & Size: 1x1, Filters: 4096, Stride: 1, Pad: 0 \\ \hline
		ReLU & \\ \hline
		Dropout & Rate: 0.5 \\ \hline
		FC & Size: 1x1, Filters: IDs, Stride: 1, Pad: 0 \\ \hline
		Softmax & \\ \hline
	\end{tabular}
	\caption{VGG D architecture \cite{19}. We set the number of filters on the final fully connected layer to the number of identities in the dataset on which the network is applied.}
	\label{tab:vgg}
\end{table}

\begin{table} [!h]
	\centering
	\begin{tabular}{ | c | c | }
		\hline
		\textbf{Layer Type} & \textbf{Settings} \\ \hline
		Conv & Kernel: 7x7, Filters: 64, Stride: 2 \\ \hline
		ReLU & \\ \hline
		Max Pool & Size: 3x3, Stride: 2 \\ \hline
		Inception & \\ \hline
		Max Pool & Size: 3x3, Stride: 2 \\ \hline
		Inception & Uses L2 Pooling \\ \hline
		Inception & Uses L2 Pooling \\ \hline
		Inception & \\ \hline
		Max Pool & Size: 3x3, Stride: 2 \\ \hline
		Inception & Uses L2 Pooling \\ \hline
		Inception & Uses L2 Pooling \\ \hline
		Inception & Uses L2 Pooling \\ \hline
		Inception & Uses L2 Pooling \\ \hline
		Inception & \\ \hline
		Max Pool & Size: 3x3, Stride: 2 \\ \hline
		Inception & Uses L2 Pooling \\ \hline
		Inception & \\ \hline
		Average Pool & Size: 7x7, Stride: 1 \\ \hline
		FC & Size: 1x1, Filters: 128, Stride: 1\\ \hline
		L2 Norm & \\ \hline
	\end{tabular}
	\caption{FaceNet architecture \cite{86}. Inception refers to the use of inception blocks as described in \cite{88}. Inception blocks listed as using L2 pooling do so in place of the standard max pooling.}
	\label{tab:facenet}
\end{table}

\newpage

\begin{figure*}
	\section*{Appendix B - Additional Pixel-Space MSE Results}
	
	\centering
	\includegraphics[width=0.95\textwidth]{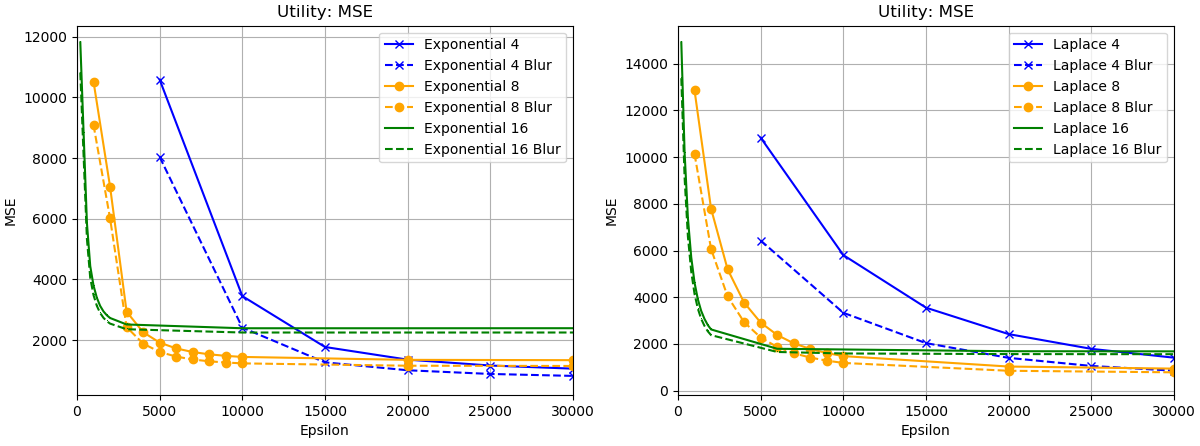}
	\caption{A comparison of the MSE achieved by the mechanisms over a wide range of privacy budgets. The left graph compares variants of the exponential mechanism and the right graph compares variants of the Laplace mechanism.}
	\label{fig:epsilon_mse}
\end{figure*}

\begin{figure*}
	\centering
	\includegraphics[width=0.95\textwidth]{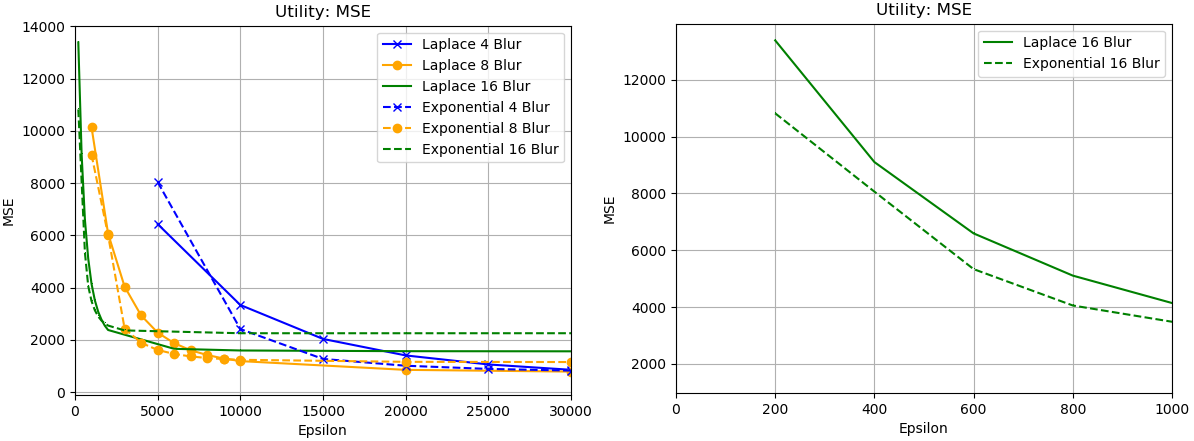}
	\caption{A comparison between the blur variants of the exponential and Laplace mechanism with respect to utility measured using MSE. The left graph compares all three pixelization settings for the mechanisms and the right graph focuses on the mechanisms using a pixelization grid of size 16 for efficient use of low privacy budgets.}
	\label{fig:epsilon_mse_both}
\end{figure*}

\end{document}